\DeclareMathOperator*{\argmin}{argmin}
\pgfplotsset{compat=1.17}
\newtheorem{theorem}{Theorem}
\newtheorem{lemma}{Lemma}
\newtheorem{definition}{Definition}
\title{On Local Overidentification and
Efficiency Gains in Modern Causal Inference and Data Combination\footnote{Authors are listed in alphabetical order. We are grateful to Oliver Linton, Kaspar W\"uthrich and the participants in the June 2025 Econometrics Workshop in honor of Professors Oliver Linton and Yoon-Jae Whang on June 13-14, 2025, Cambridge University for helpful comments.}}
\author{Xiaohong Chen\thanks{Yale University. Email: xiaohong.chen@yale.edu.} \quad Haitian Xie\thanks{Peking University. Email: xht@gsm.pku.edu.cn.}}
\date{First version on arXiv: Oct 19, 2025; Revised version: \today}
\begin{document}

\maketitle

\begin{abstract} 
\singlespacing
This paper studies nonparametric local (over-)identification and the semiparametric efficiency in modern causal frameworks. We develop a unified approach that begins by translating structural models with latent variables into their induced statistical models of observables and then analyzes local overidentification through conditional moment restrictions. We apply this approach to three popular classes of causal models: (1) the general treatment model under unconfoundedness; (2) the negative control model, and (3) the long-term causal inference model under unobserved confounding. The first model yields a locally just-identified statistical model, implying that all regular asymptotically linear estimators of the treatment effect have the same asymptotic variance, which equals the (trivial) semiparametric efficient variance bound. In contrast, the latter two models involve nonparametric endogeneity and are naturally locally overidentified; consequently, some doubly robust orthogonal moment estimators of the average treatment effect are inefficient. Whereas existing work typically imposes strong conditions to restore local just-identification to justify the efficiency of their doubly robust orthogonal moment estimators, we characterize the semiparametric efficient variance bounds, along with efficient estimators, for the (locally) overidentified models (2) and (3). A small real data application, along with a simulation study, illustrates the semiparametric efficiency gains in model (3).

    \bigskip
    \noindent {\bf Keywords:} Causal Inference, Local Just Identification, Local Overidentification, Long-Term Treatment Effect, Negative Controls, Semiparametric Efficiency.

\end{abstract}

\newpage

\section{Introduction}

In the era of generative artificial intelligence (AI) and abundant off-the-shelf machine-learning (ML) tools, it has never been easier to fit flexible models and report “black-box” causal effects. A common workflow estimates a nonparametric object, such as a conditional mean, quantile, or density, using whatever ML packages in a first stage, which is then plugged into an unconditional moment condition to estimate a causal parameter in the second stage. Hidden in this popular workflow is a fundamental econometric point that has been largely overlooked in the modern ML causal literature: local (over-)identification of the model.

Following \cite{chen2018overidentification}, a statistical model is locally just identified at a data distribution when the model’s tangent space spans all valid score directions at that data distribution. Intuitively, this means that near the true data-generating process, any small change might be observed in the data can be matched by the model, so there is no additional information to make estimators more efficient and no locally testable restrictions. 

Modern causal models, however, are typically structural rather than purely statistical: they involve unobserved variables—such as potential outcomes, negative controls, or latent confounders—and identification assumptions formulated as restrictions on the joint distribution of these unobserved variables. Our contribution is to make this distinction explicit and to provide a unified approach to studying local (over-)identification in causal models. We achieve this by translating each structural model into an observationally equivalent statistical model of observables and then analyzing local identification of the observable distribution. This structural-to-statistical translation, combined with the semiparametric efficiency calculation method for general sequential conditional moment restriction models in \citet{ai2012semiparametric}, yields a unified framework for determining efficiency bounds and constructing efficient estimators, thereby avoiding the need for case-by-case analyses across different causal designs.

We use this framework to study three representative modern causal inference models: (1) the general treatment model under unconfoundedness \citep{ai2021unified}, (2) the negative control model \citep{miao2018identifying}, and (3) the long-term causal inference model under unobserved confounding \citep{imbens2025long}. The first model is without nonparametric endogeneity: under unconfoundedness, treatment assignment is as good as random given observed covariates, so the causal effects are determined by comparing observable conditional distributions. In contrast, the negative control model and the long-term causal inference feature nonparametric endogeneity: the treatment effect is identified through solving a nonparametric instrumental variables (NPIV) type inverse problem, in which the nuisance functions appear inside a conditional expectation operator. As shown by \cite{chen2018overidentification}, nonparametric endogeneity typically implies local over-identification, and efficiency considerations become essential in this case. 

The unconfoundedness design, we show, is typically locally just identified. Two consequences follow, regardless of the sophistication of the ML first stage: (i) all regular, asymptotically linear estimators of the treatment effect are first-order equivalent, so the semiparametric efficiency bound is trivial; and (ii) there is no nontrivial specification test for the causal model. In contrast, designs with nonparametric endogeneity are naturally locally overidentified. Rather than imposing extra structure to force local just-identification as in the literature, we directly characterize the efficiency bound in the overidentified case and construct the corresponding efficient estimators. To this end, we formulate these designs as sequential moment condition models and apply \cite{ai2012semiparametric} to obtain efficient influence functions.

Beyond the semiparametric efficiency results in \cite{ai2003efficient,ai2012semiparametric}, a few other recent studies examine the case of overidentification and semiparametric efficiency. \cite{navjeevan2023identification} studies the identification and semiparametric efficiency in a general class of instrumental variables model with effect heterogeneity. \cite{hahn2024overidentification} analyze the overidentifying restrictions that arise in the shift-share (Bartik) instrument designs. \cite{chen2025efficient} derives semiparametric efficient estimators in multi-period difference-in-differences design. 

Overidentification can arise when nuisance functions are specified parametrically or treated as known. Imposing functional form restrictions can lead to local overidentification of the joint distribution of observables, allowing for the existence of estimators that are strictly more efficient than others. A well-known example is the inefficiency of the inverse probability weighting estimator that uses the true propensity score in estimating the average treatment effect \citep{hirano2003efficient}, as well as in more general GMM models \citep{chen2004semiparametric,chen2008semiparametric}. More recently, \citet{CarlsonDell2025} develop a unified framework for robust and efficient estimation with unstructured data that relies on a known propensity score (which they term the “annotation score”). These examples demonstrate that nontrivial efficiency bounds can emerge from functional form restrictions on nuisance functions. At the same time, in practice—especially in the current era of AI and ML—researchers often prefer to estimate nuisance functions nonparametrically using flexible, data-adaptive methods.

Local overidentification can also arise in semiparametric two-stage GMM settings, where the target parameter is defined by potentially overidentified unconditional moment conditions, while the first-stage nonparametric nuisance functions are just identified. In this case, \citet{ackerberg2014asymptotic} show that the semiparametric two-step GMM estimator achieves efficiency.

The remainder of the paper is organized as follows. Section \ref{sec:concepts} formally introduces the concept of local (over-)identification and its connection to semiparametric efficiency gains. Sections \ref{sec:basic}, \ref{sec:negative-control}, and \ref{sec:long-term} examine the general treatment model, negative control, and long-term causal inference models, respectively. Section \ref{sec:numerical} presents an empirical illustration along with a simulation study. Section \ref{sec:conclude} concludes.

\section{Review: Model over-identification and efficiency gains} \label{sec:concepts}

In this section, we first introduce the concepts of structural and statistical models. We then review the key concept of local just-identification and over-identification of a statistical model introduced by \citet{chen2018overidentification}. We also summarize its implications for the existence of efficiency gains in estimation and testing of the regular linear parameter of interest.

\paragraph{Structural and statistical model.} In economic applications, structural variables describe aspects of the data-generating process (DGP) and reflect the researcher’s view of underlying mechanisms. These variables may be hypothetical and exist only in economic theory. In modern causal inference, structural variables encompass potential outcomes under different treatments, potential treatments under different instruments, and possibly other latent variables.

Let $W^*$ denote the vector of structural variables, taking values in $\mathcal{W}^*$. The structural model $\mathbf{P}^*$ is the collection of joint distributions of $W^*$ consistent with the imposed structural assumptions, which capture the researcher’s economic intuition about the setting. The structural variables are not fully observed. We instead observe $W = s(W^*)$, where $s:\mathcal{W}^* \to \mathcal{W}$ is a transformation into the observable space. The statistical model is the set of distributions of $W$ induced by $\mathbf{P}^*$:
\[
   \mathbf{P} = \{ P = P^* \circ s^{-1} : P^* \in \mathbf{P}^* \},
\]
where $P^* \circ s^{-1}$ is the pushforward measure of $P^*$ by the function $s$. 
Although the structural model encodes theoretical restrictions, estimation and inference are always carried out in the induced statistical model.

\paragraph{Model just-identification and over-identification.} 

For any Euclidean set $\mathcal{W}$, denote $\mathbf{M}(\mathcal{W})$ as the set of all probability measures on $\mathcal{W}$, where we always consider the Borel sigma-algebra associated with the Euclidean space.

\begin{definition}
   A statistical model $\mathbf{P}$ is globally just identified if it is fully unrestricted, that is, $\mathbf{P} = \mathbf{M}(\mathcal{W})$. Conversely, $\mathbf{P}$ is globally overidentified if $\mathbf{P}$ is a strict subset of $\mathbf{M}(\mathcal{W})$.
\end{definition}

The concept of global just identification is based on whether there is any restriction imposed on the statistical model $\mathbf{P}$. Although seemingly stringent, the structural assumptions may impose no observable restrictions, so the statistical model remains globally just identified. In Section \ref{sec:basic}, we show that the unconfoundedness model is globally just identified even though assumptions are imposed in the structural model.

The concept of local identification requires the notion of tangent space. Let $L^2_0(P)$ denote the set of mean zero and $P$-square integrable functions, that is,
\begin{align*}
    L^2_0(P) = \Big\{ g:\mathcal{W} \rightarrow \mathbb{R}, \int g dP = 0, \int g^2 dP < \infty \Big\}.
\end{align*}
Take $g \in L^2_0(P)$ and $P \in \mathbf{P}$. A path is a mapping $\theta \mapsto P_{\theta,g}$ from $[0,\bar{\theta})$ to $\mathbf{M}(\mathcal{W})$ such that $P_{0,g} = P$, and 
    \begin{align} \label{eqn:differentiable-quadratic-mean}
        \lim_{\theta \downarrow 0} \int \left( \big( \sqrt{p_{\theta,g}(w)} - \sqrt{p(w)} \big)/\theta - \frac{1}{2}g(w) \sqrt{p(w)}  \right)^2 d\mu_\theta(w) = 0,
    \end{align}
    where $\mu_\theta$ is a $\sigma$-finite positive measure dominating $P_{\theta,g} + P$, and $p_{\theta,g}$ and $p$ denote the respective densities of $P_{\theta,g}$ and $P$.

For the statistical model $\mathbf{P}$, the \emph{tangent space} at $P$ is defined as the closed linear span of the set of feasible scores within $\mathbf{P}$:
\begin{align*}
    \bar{\mathscr{T}}(P) = \textit{cl}\left( \big\{g \in L^2_0(P): (\ref{eqn:differentiable-quadratic-mean}) \text{ holds for some path } \theta \mapsto P_{\theta,g} \in \mathbf{P} \big\} \right),
\end{align*}
where $\textit{cl}$ denotes the closed linear span of a set. By definition, the tangent space $\bar{\mathscr{T}}(P)$ is a subset of $L^2_0(P)$. Whenever this inclusion holds as equality, we say the distribution $P$ is locally just identified by $\mathbf{P}$.

\begin{definition}
   A distribution $P \in \mathbf{P}$ is locally just identified by $\mathbf{P}$ if $\bar{\mathscr{T}}(P) = L^2_0(P)$. Conversely, $P \in \mathbf{P}$ is locally overidentified by $\mathbf{P}$ if $\bar{\mathscr{T}}(P) \subsetneqq L^2_0(P)$.
\end{definition} 

Under global just-identification, $P$ can be approach from any score direction, and hence local just-identification holds given regularity of the model. However, global over-identification does not imply local over-identification, while local over-identification implies global over-identification. See \citet{chen2018overidentification} for details.

A parameter of interest is represented as a functional $\mu:\mathbf{P}\to\mathbb{R}^{d_\mu}$. Following  \cite{AndrewsChenTecchio2025EstimatorPurpose}, we refer to $(\mu,\mathbf{P})$ as an econometric model. A parameter $\mu$ is said to be regular at $P$ if there exists $\psi \in L^2_0(P)$ such that for every path $\theta \mapsto P_{\theta,g}$ with score $g \in T(P)$, 
\begin{align*}
    \frac{d}{d\theta} \mu(P_{\theta,g}) \Big|_{\theta = 0} = \mathbb{E}[ \psi(W)g(W) ].
\end{align*}

\paragraph{Regular asymptotically linear estimator.} An estimator $\hat\mu$ maps the independent and identically distributed (iid) sample $W_1,\ldots,W_n$ into $\mathbb{R}^{d_\mu}$. For a path $\theta \mapsto P_{\theta,g}$, denote $\overset{L_{n,g}}{\rightarrow}$ as convergence in law under $\otimes_{i=1}^n P_{1/\sqrt{n},g}$ and $\overset{L}{\rightarrow}$ as convergence in law under $P^n$. We use $o_P(1)$ to denote a term that converges in probability to zero under $P$. The estimator $\hat{\mu}$ is said to be a regular estimator of $\mu(P)$ if there is a tight random variable $\zeta$ such that 
    \begin{align*}
        \sqrt{n}(\hat{\mu} - \mu(P_{1/\sqrt{n},g})) \overset{L_{n,g}}{\rightarrow} \zeta, 
    \end{align*}
    for any path $\theta \mapsto P_{\theta,g} \in \mathbf{P}$.
The estimator $\hat{\mu}$ is asymptotically linear if
    \begin{align*}
        \sqrt{n}(\hat{\mu} - \mu(P)) = \frac{1}{\sqrt{n}} \sum_{i=1}^n \psi(W_i) + o_P(1),
    \end{align*}
for some $\psi \in L^2_0(P)$. The function $\psi$ is the influence function of the estimator. The efficient influence function is the projection of any influence function onto the tangent space, attaining the minimal covariance matrix among all influence functions.

We emphasize some results from \citet{chen2018overidentification} that we use in this paper: the presence of more efficient estimators of a regular parameter $\mu(P)$ is equivalent to the data distribution $P$ being locally over-identified by the model $\mathbf{P}$. Equivalently, in locally just-identified models, all regular, asymptotically linear estimators are first-order equivalent. In locally just-identified models, any specification test has trivial local power: along any path, the local asymptotic power of a level-$\alpha$ test cannot exceed $\alpha$.

\paragraph{Graphical demonstration.} To build intuition, we illustrate these concepts with a simple two-dimensional graphical demonstration. Consider the statistical variable $W$ whose support contains three points, $\mathcal{W}=\{a,b,c\}$. The set of distributions on $\mathcal{W}$ is equal to
\begin{align*}
    \mathbf{M}(\{a,b,c\}) = \{(p_a,p_b,p_c): p_a,p_b,p_c \in [0,1],p_a+p_b+p_c = 1\},
\end{align*}
which can be equivalently represented as a two-dimensional triangle:\footnote{This triangle of probabilities is often used for demonstration purposes in the literature. It is sometimes referred to as the Marschak-Machina triangle.}
\begin{align*}
    \mathbf{M}(\{a,b,c\}) = \{(p_a,p_b): p_a,p_b \in [0,1],p_a+p_b \leq 1\}.
\end{align*}
For a specific distribution $P = (p_a,p_b,p_c)$, the set of all scores is equal to
\begin{align*}
    L^2_0(P) = \{(g_a,g_b,g_c): p_a g_a + p_b g_b + p_c g_c = 0\},
\end{align*}
which is equivalent to the set of all directions $(g_a,g_b)$ on the two-dimensional plane.

We consider three statistical models in this space:
\begin{align*}
    \mathbf{P}_1 & = \mathbf{M}(\{a,b,c\}), \\
    \mathbf{P}_2 & = \{(p_a,p_b) \in \mathbf{M}(\{a,b,c\}): p_b \geq p_a /2 \}, \\
    \mathbf{P}_3 & = \{(p_a,p_b) \in \mathbf{M}(\{a,b,c\}): p_b = p_a  \}.
\end{align*}
By definition, the model $\mathbf{P}_1$ is globally just identified while the models $\mathbf{P}_2$ and $\mathbf{P}_3$ are globally overidentified. For local identification, we consider a distribution $P = (p_a=0.4,p_b=0.4)$ that belongs to all three statistical models. The distribution $P$ is locally just identified by the models $\mathbf{P}_1$ and $\mathbf{P}_2$ while locally overidentified by the model $\mathbf{P}_3$.

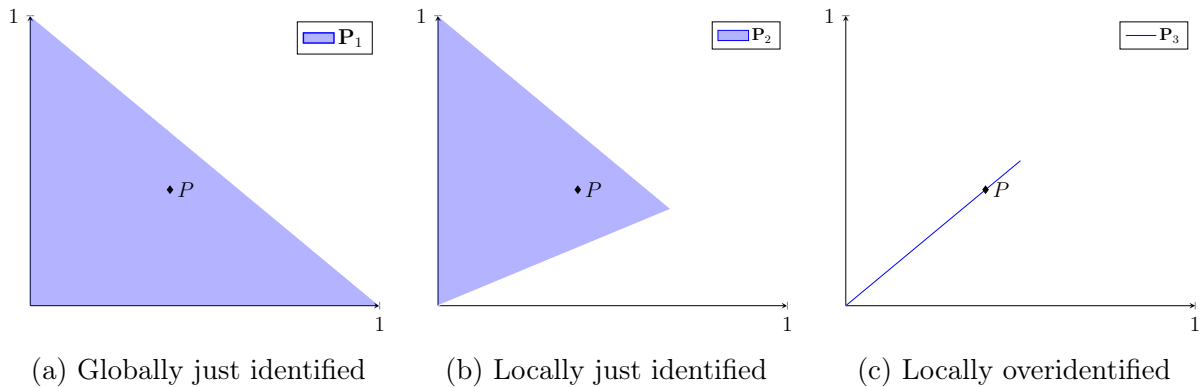
\begin{figure}[!htbp]
    \centering
    \begin{subfigure}[]{0.32\textwidth}
    \resizebox{1\textwidth}{!}{%
    \begin{tikzpicture}
	\begin{axis}[
	legend style={nodes={scale=1, transform shape}},
	axis y line=middle, 
	axis x line=middle,
	y axis line style={opacity=1},
	ytick={0,1},
	xtick={0,1},
	]
				\addplot [
					name path=one,
					forget plot,
					domain=0:1, 
					samples=100, 
					color=white,
					fill opacity=0.3
				]
				{1-x};
				\path[name path=xaxis] (axis cs:0,0) -- (axis cs:1,0);
				\path[name path=yaxis] (axis cs:0,0) -- (axis cs:0,1);
				\addplot [
					thick,
					color=blue,
					fill=blue, 
					fill opacity=0.3
				]
				fill between[
					of=one and xaxis,
					soft clip={domain=0:1},
				];\legend{$\mathbf{P}_1$};
				\addplot[black, mark=diamond*] coordinates{(0.4,0.4)} node[right, color=black] {$P$};
			
				after end axis/.code={
					\path (axis cs:0,0) node [anchor=north west,yshift=-0.075cm,xshift=-0.075cm] {0};
					
				}
				\end{axis}						
    \end{tikzpicture}
    }
    \caption{Globally just identified}
    \end{subfigure}
    \begin{subfigure}[]{0.32\textwidth}
    \resizebox{1\textwidth}{!}{%
    \begin{tikzpicture}
	\begin{axis}[
	legend style={nodes={scale=0.8, transform shape}},
	axis y line=middle, 
	axis x line=middle,
	y axis line style={opacity=1},
	ytick={0,1},
	xtick={0,1},
	]
				\addplot [
					name path=one,
					forget plot,
					domain=0:1, 
					samples=100, 
					color=white,
				]
				{1-x};
				\addplot [
					name path=two,
					forget plot,
					domain=0:1, 
					samples=100, 
					color=white,
				]
				{x/2};
				\path[name path=xaxis] (axis cs:0,0) -- (axis cs:1,0);
				\path[name path=yaxis] (axis cs:0,0) -- (axis cs:0,1);
				\addplot [
					color=blue,
					fill=blue, 
					fill opacity=0.3
				]
				fill between[
					of=one and two,
					soft clip={domain=0:2/3},
				];\legend{$\mathbf{P}_2$};
				\addplot[black, mark=diamond*] coordinates{(0.4,0.4)} node[right, color=black] {$P$};
			
				after end axis/.code={
					\path (axis cs:0,0) node [anchor=north west,yshift=-0.075cm,xshift=-0.075cm] {0};
					
				}
				
				\end{axis}						
    \end{tikzpicture}
    }
    \caption{Locally just identified}
    \end{subfigure}
    \begin{subfigure}[]{0.32\textwidth}
    \resizebox{1\textwidth}{!}{%
    \begin{tikzpicture}
	\begin{axis}[
	legend style={nodes={scale=0.8, transform shape}},
	axis y line=middle, 
	axis x line=middle,
	y axis line style={opacity=1},
	ytick={0,1},
	xtick={0,1},
	]
				\addplot [
					name path=one,
					forget plot,
					domain=0:1, 
					samples=100, 
					color=white,
					fill opacity=0.3
				]
				{x};
				\addplot [
					name path=one,
					domain=0:0.5, 
					samples=100, 
					color=blue,
					fill opacity=0.3
				]
				{x};\legend{$\mathbf{P}_3$};
				\path[name path=xaxis] (axis cs:0,0) -- (axis cs:1,0);
				\path[name path=yaxis] (axis cs:0,0) -- (axis cs:0,1);
				\addplot[black, mark=diamond*] coordinates{(0.4,0.4)} node[right, color=black] {$P$};
			
				after end axis/.code={
					\path (axis cs:0,0) node [anchor=north west,yshift=-0.075cm,xshift=-0.075cm] {0};
					
				}
				
				\end{axis}						
    \end{tikzpicture}
    }
    \caption{Locally overidentified}
    \end{subfigure}
    \caption{Two-dimensional demonstration of model identification.}
    \label{fig:demo}
\end{figure}

Figure~\ref{fig:demo} illustrates the three statistical models. Both $\mathbf{P}_1$ and $\mathbf{P}_2$ are two-dimensional with nonempty interiors, while $\mathbf{P}_3$ is one-dimensional with no interior: the line segment representing $\mathbf{P}_3$ is the path $\theta \mapsto (p_a,p_b)(1+\theta)$. Intuitively, $P$ is locally just identified if and only if it is an interior point of the model. In Figures~\ref{fig:demo}(a)–(b), $P$ lies in the interior, so a parametric submodel can approach $P$ from any direction. In Figure~\ref{fig:demo}(c), $P$ is on the boundary, so submodels can approach only from two directions.

To illustrate the connection to efficiency gains, consider estimating 
$p_a = \mathbb{P}_P(W=a)$ from an iid sample $W_1,\ldots,W_n$. 
In models $\mathbf{P}_1$ or $\mathbf{P}_2$, any regular, asymptotically linear 
estimator has the same asymptotic variance as the sample mean 
$\tfrac{1}{n}\sum_{i=1}^n \mathbf{1}\{W_i=a\}$. 
In contrast, under $\mathbf{P}_3$, the additional restriction $p_a=p_b$ 
allows construction of a strictly more efficient estimator, 
$\tfrac{1}{2n}\sum_{i=1}^n \mathbf{1}\{W_i\in\{a,b\}\}$.

\section{General treatment model under unconfoundedness} \label{sec:basic}

In this section, we introduce the general treatment model with the unconfoundedness assumption, derive the corresponding structural and statistical model, and show that the statistical model is globally just identified. 

In the general treatment model, the treatment variable $T$ has support $\mathcal{T} \subset \mathbb{R}$, which may be discrete, continuous, or mixed. Let $Y(t)$ denote the potential outcome when treatment is assigned at $t \in \mathcal{T}$. For simplicity, we assume that the support of $Y(t)$, $\mathcal{Y} \subset \mathbb{R}$, is a bounded set so that the moments of $Y(t)$ exist.  The conditioning covariate is $X \in \mathcal{X}$. The unconfoundedness assumption states that the treatment choice is conditionally independent of the potential outcomes, that is, $Y(t) \perp T | X$. 
Formally, we define the structural model to be
\begin{align*}
    \mathbf{P}^*_{\operatorname{UC}} = \{ P^* \in \mathbf{M}(\mathcal{Y}^\mathcal{T} \times \mathcal{T} \times \mathcal{X}): Y(t) \perp T | X, \forall t \in \mathcal{T}, \text{ under } P^* \},
\end{align*}
where the subscript UC denotes ``unconfoundedness.''
Any element $P^* \in \mathbf{P}^*_{\operatorname{UC}}$ of the structural model is a plausible joint distribution of $(\{Y(t)\}_{t \in \mathcal{T}},T,X)$ such that the unconfoundedness assumption is satisfied.

The potential outcomes are not observed. Instead, we observe the realized outcome $Y$ based on the treatment choice $T$: $ Y = Y(T).$
The statistical model is the set of distributions of the observable data $(Y,T,X)$ induced by the structural model. Let $s:\mathcal{Y}^\mathcal{T} \times \mathcal{T} \times \mathcal{X} \rightarrow \mathcal{Y} \times \mathcal{T} \times \mathcal{X}$ denote the transformation from structural variables to observed variables, that is, 
\begin{align*}
    s(\{Y(t)\}_{t \in \mathcal{T}},T,X) = (Y(T),T,X).
\end{align*}
The statistical model is formally defined as
\begin{align*}
    \mathbf{P}_{\operatorname{UC}} = \{P^* \circ s^{-1}:P^* \in \mathbf{P}_{\operatorname{UC}}^*\},
\end{align*}
where $P^* \circ s^{-1}$ denotes the pushforward measure of $P^*$ by the function $s$. That is, the statistical model $\mathbf{P}_{\operatorname{UC}}$ consists of all probability measures $P$ of the observable data $(Y,T,X)$ such that $P$ can be induced by some structural distribution $P^* \in \mathbf{P}^*_{\operatorname{UC}}$.

\cite{ai2021unified} consider the following implicitly defined parameter $\mu(P)$:
\begin{align*}
    \mu(P)=\argmin_{\mu\in\mathbb{R}^{d_{\mu}}} \mathbb{E}\!\left[\frac{dP_T(T)}{dP_{T| X}(T| X)}\,L\big(Y-g(T,\mu)\big)\right],
\end{align*}
where $L$ is a convex, differentiable loss function and $g$ is a parametric causal effect function known up to the parameter $\mu$. The terms $dP_{\cdot}$ and $dP_{\cdot|\cdot}$ denote the marginal and conditional density functions, respectively, and the ratio $dP_T/dP_{T| X}$ is defined where $dP_{T|X}>0$ and set to zero otherwise. 

This general treatment model encompasses many prominent models in the causal inference literature as special cases: the average treatment effect (ATE) under binary treatment \citep{hahn1998role,hirano2003efficient}, the quantile treatment effect \citep{firpo2007efficient}, and multivalued treatments \citep{cattaneo2010efficient}. See \citet{ai2021unified} and the references therein. See also \cite{chen2024causal} for efficient estimation of general treatment effects with a diverging number of confounders using modern neural network techniques.

Our analysis focuses on finite-dimensional causal parameters. For inference on fully nonparametric objects such as the conditional average treatment effect (CATE) function, see, for example, \citet{chang2015nonparametric,lee2017doubly}.

The following theorem states our result regarding the general treatment model under unconfoundedness.

\begin{theorem} \label{thm:ATE-basic}
    The statistical model $\mathbf{P}_{\operatorname{UC}}$ is globally just identified, that is, $\mathbf{P}_{\operatorname{UC}} = \mathbf{M}(\mathcal{Y}^{\mathcal{T}} \times \mathcal{T} \times \mathcal{X})$. Hence, the model is locally just identified.
\end{theorem}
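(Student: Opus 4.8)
The plan is to establish the set equality by proving the two inclusions, where the target is the full set $\mathbf{M}(\mathcal{Y}\times\mathcal{T}\times\mathcal{X})$ of laws on the \emph{observable} space $(Y,T,X)$. The inclusion $\mathbf{P}_{\operatorname{UC}}\subseteq\mathbf{M}(\mathcal{Y}\times\mathcal{T}\times\mathcal{X})$ is immediate, since every pushforward $P^*\circ s^{-1}$ of a probability measure is again a probability measure on the observable space. The content of the theorem is the reverse inclusion: the unconfoundedness restriction must impose no testable restriction on the law of observables, so that for an arbitrary $P\in\mathbf{M}(\mathcal{Y}\times\mathcal{T}\times\mathcal{X})$ I can exhibit a structural law $P^*\in\mathbf{P}^*_{\operatorname{UC}}$ with $P^*\circ s^{-1}=P$.

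The construction I would use preserves the observable conditional structure of $P$ while decoupling the potential outcomes from the treatment. First disintegrate $P$ into the marginal $P_{T,X}$ of $(T,X)$ and the regular conditional distribution $P_{Y\mid T,X}$ of $Y$ given $(T,X)$; both exist because the spaces are Euclidean, hence Polish. I would then define $P^*$ on $\mathcal{Y}^{\mathcal{T}}\times\mathcal{T}\times\mathcal{X}$ by keeping the $(T,X)$-marginal equal to $P_{T,X}$ and specifying that, conditional on $X=x$, the potential-outcome process $\{Y(t)\}_{t\in\mathcal{T}}$ is independent of $T$ with each coordinate distributed as $Y(t)\mid X=x \sim P_{Y\mid T=t,X=x}$. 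Conditional independence from $T$ is built in by construction, so that $P^*\in\mathbf{P}^*_{\operatorname{UC}}$, and taking the coordinates mutually independent given $X$ yields a consistent family of finite-dimensional distributions to which an extension theorem can be applied.

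It then remains to verify $P^*\circ s^{-1}=P$. The $(T,X)$-marginal matches by construction, so it suffices to check the conditional law of $Y=Y(T)$ given $(T,X)=(t,x)$. Under $P^*$, unconfoundedness forces the law of $Y(t)$ given $T=t,X=x$ to coincide with its law given $X=x$ alone, namely $P_{Y\mid T=t,X=x}$; hence the induced conditional law of $Y(T)$ given $(T,X)=(t,x)$ equals $P_{Y\mid T=t,X=x}$, matching $P$. Combined with the matched marginal, the two joint laws agree, establishing the reverse inclusion and the global just-identification claim. Local just-identification is then immediate from the fact, noted earlier in the excerpt, that global just-identification implies local just-identification.

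The step I expect to be the main obstacle is making this construction and pushforward rigorous when $\mathcal{T}$ is uncountable, so that $\{Y(t)\}_{t}$ lives in the infinite-dimensional product space $\mathcal{Y}^{\mathcal{T}}$. Two points need care: realizing $\{Y(t)\}_t$ as a genuine measure on $\mathcal{Y}^{\mathcal{T}}$ with the prescribed conditional marginals, which the product coupling and the Kolmogorov extension theorem supply; and the measurability of the evaluation underlying $s$, since $(f,t)\mapsto f(t)$ need not be jointly measurable for the full product $\sigma$-algebra under a continuum of indices. The latter is handled by observing that the induced law of $(Y(T),T,X)$ is pinned down entirely by the one-dimensional conditionals $Y(t)\mid X=x$ evaluated along $t=T$, so the pushforward may be defined directly through these conditionals without appealing to pathological sample paths. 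Both issues are routine given the Polish setting, but this is where the genuine, as opposed to purely formal, content of the argument resides.
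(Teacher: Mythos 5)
Your proposal is correct and follows essentially the same route as the paper's own proof: both construct the structural law $P^*$ by keeping the observed $(T,X)$ distribution, making the potential-outcome coordinates conditionally independent of $T$ (and of each other) given $X$ with marginals $Y(t)\mid X=x \sim P_{Y\mid T=t,X=x}$, invoking the Kolmogorov extension theorem for general $\mathcal{T}$, and verifying the pushforward identity on rectangles. Your closing remark on the non-measurability of the evaluation map over an uncountable $\mathcal{T}$ is a subtlety the paper's proof passes over silently, and your resolution via the one-dimensional conditionals is the right one.
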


Theorem \ref{thm:ATE-basic} implies that for estimating the same causal effect in the general treatment model, different regular, asymptotically linear estimators are first-order equivalent and attain the efficiency bound given by Theorem 1 in \cite{ai2021unified}. 

A subtle point is that ensuring the treatment effect parameter is $\sqrt{n}$-estimable 
requires a regularity condition that the efficient influence function has finite variance.%
\footnote{For example, as shown in \citet{chen2004semiparametric}, 
$\sqrt{n}$-estimability can be achieved under a mild condition on the propensity score, 
which is weaker than assuming it is bounded away from zero.} 
This makes the model globally overidentified. 
However, as shown in the proof of Theorem \ref{thm:ATE-basic}, imposing such a regularity condition does not affect the tangent space, 
and hence the model remains locally just identified.

When the propensity score is known or parametrically specified, the model is locally overidentified. In this case, there exist regular, asymptotically linear estimators that are strictly inefficient; for example, the inverse probability weighting estimator for the average treatment effect using the true propensity score is inefficient \citep{hirano2003efficient}.


\section{Negative control model} \label{sec:negative-control}

The previous model does not involve nonparametric endogeneity and is locally just identified. In this section and the next, we study models with nonparametric endogeneity, represented by NPIV-type conditional moment restrictions. Such models are naturally locally overidentified, as analyzed by \citet{chen2018overidentification}.

Here, we focus on the negative control model studied by \citet{miao2018identifying} and subsequent work. We follow \citet{tchetgen2024introduction} to formulate the potential outcome framework. Let $X$ denote baseline covariates, $D\in\{0,1\}$ the treatment, $V$ the negative-control outcome, and $Z$ the negative-control exposure. For each treatment level $d\in\{0,1\}$ and exposure value $z$ in the support $\mathcal{Z}$ of $Z$, define the potential outcomes $Y(d,z)$ and $V(d,z)$ as the values $Y$ and $V$ would take if, possibly contrary to fact, we set $D=d$ and $Z=z$; likewise define the potential treatment $D(z)$ as the treatment that would be realized if $Z$ were set to $z$. The observed treatment and outcomes are generated according to
\begin{align*}
    D=D(Z),\qquad Y=Y\big(D,Z\big),\qquad V=V\big(D,Z\big).
\end{align*}
Thus, the observed variables are $(Y,D,V,Z,X)$.

The following structural assumptions are maintained for the identification of the average treatment effect.
\begin{enumerate}[label=(\arabic*)]
\item Negative–control outcome: 
$V(d,z)=V$ for all $d,z$.
\item Negative–control exposure: 
$Y(d,z)=Y(d,z')=Y(d)$ for all $d$ and all $z,z'$.
\item Latent ignorability: there exists an unobserved latent variable $U$ such that, for all $d,z$,
$(Y(d),V) \perp (D,Z) | (U,X).$
\item Outcome bridge function: there exists a function $h$ such that
\begin{align} \label{eqn:neg-contr-bridge}
    \mathbb{E}\left[Y | D,X,Z\right]=\mathbb{E}\left[h(V,D,X)| D,X,Z\right].
\end{align}
\item Completeness: Given $D$ and $X$, the distribution of $U$ is complete for $Z$.
\end{enumerate}

The estimand function can be written as
\begin{align*}
    \mu(P) = \mathbb{E}_P[h(V,1,X) - h(V,0,X)],
\end{align*}
which identifies the average treatment effect under the standard overlap condition. 

We denote the statistical model induced by the above identification assumptions as $\mathbf{P}_{\operatorname{NC}}$.
In the following, we show that $\mathbf{P}_{\operatorname{NC}}$ is fully characterized by the NPIV-type moment (\ref{eqn:neg-contr-bridge}). 

\begin{lemma} \label{lm:neg-contr-obs-equiv}
    The family of probability distributions of $(\{(Y(d,z),V(d,z),D(z)):d=0,1,z\in\mathcal{Z}\},U,Z,X)$ satisfying the above identification assumptions is observationally equivalent to the family of probability distributions of $(Y,D,V,Z,X)$ satisfying the moment condition (\ref{eqn:neg-contr-bridge}).
\end{lemma}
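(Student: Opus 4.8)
The plan is to verify the claimed observational equivalence by proving two inclusions for the induced statistical model, namely $\mathbf{P}_{\operatorname{NC}} = \{P : \exists\, h \text{ such that } (\ref{eqn:neg-contr-bridge}) \text{ holds}\}$. Throughout, let $s$ denote the structural-to-observable map sending $(\{(Y(d,z),V(d,z),D(z))\},U,Z,X)$ to $(Y,D,V,Z,X)$ through the generation rule $D=D(Z)$, $Y=Y(D,Z)$, $V=V(D,Z)$.

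For the inclusion $\mathbf{P}_{\operatorname{NC}}\subseteq\{(\ref{eqn:neg-contr-bridge})\}$, take any structural law satisfying (1)--(5). Assumptions (1) and (2) collapse the realized variables to $V=V(D,Z)=V$ and $Y=Y(D,Z)=Y(D)$, so the observables are well defined, and since the conditional expectations appearing in (\ref{eqn:neg-contr-bridge}) are functionals of the observed coordinates $(Y,D,V,Z,X)$ alone, the bridge restriction---assumed at the structural level---transfers verbatim to the pushforward $P=P^*\circ s^{-1}$. This direction is therefore immediate and uses only (1), (2), and (4).

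The substantive direction is the reverse inclusion. Given any observable $P$ admitting a bridge function $h$, I would construct a structural law consistent with (1)--(5) that pushes forward to $P$. I set $V(d,z):=V$ and $Y(d,z):=Y(d)$ with $Y(D):=Y$, so (1), (2) and consistency hold by construction, and I assign $D(Z):=D$, leaving the off-path treatments $D(z)$, $z\neq Z$, and the counterfactual arm $Y(1-D)$ as free design choices that $P$ does not restrict. To secure latent ignorability (3), the natural device is to take the latent variable to contain the full potential-outcome block, $U\supseteq(Y(0),Y(1),V)$, so that $(Y(d),V)$ is a deterministic function of $(U,X)$ and hence trivially independent of $(D,Z)$ given $(U,X)$; the realized coordinates are then assigned the law dictated by $P$, which fixes the pushforward.

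The main obstacle is reconciling this choice with the completeness requirement (5): the map $g\mapsto\mathbb{E}[g(U)\mid Z,D,X]$ must be injective, yet the dependence on $Z$ of the realized block $(Y(D),V)$ is rigidly tied to $P$ and varies in lockstep with the conditioning value, so the plain choice $U=(Y(0),Y(1),V)$ need not be complete. The freedom available is exactly the part of the model that $P$ leaves undetermined---the unobserved counterfactual outcome $Y(1-D)$ and the latitude to enlarge $U$ by an auxiliary latent coordinate whose conditional law given $(D,X)$ depends on $Z$ through a complete (e.g.\ exponential) family. Enlarging $U$ in this way preserves (3), since $(Y(d),V)$ remains a function of $U$, and leaves the pushforward unchanged, since the added coordinate is latent; the crux is to verify that it can be arranged to restore injectivity in $Z$ for \emph{every} $P$ satisfying (\ref{eqn:neg-contr-bridge}). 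Once this construction is validated, the resulting law satisfies (1)--(5) and induces $P$, giving $\{(\ref{eqn:neg-contr-bridge})\}\subseteq\mathbf{P}_{\operatorname{NC}}$ and completing the equivalence.
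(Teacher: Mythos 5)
Your forward inclusion is fine (and indeed nearly vacuous, since assumption (4) is already stated on the observables, so it transfers to the pushforward immediately). The genuine gap is in the reverse inclusion, and it is not just an unverified step: the construction you propose cannot be completed. Putting the whole potential-outcome block inside the latent variable, $U\supseteq (Y(0),Y(1),V)$, makes latent ignorability (3) trivial but makes completeness (5) generically impossible. Completeness requires the map $g\mapsto \mathbb{E}[g(U)\mid Z,D,X]$ to be injective over \emph{all} square-integrable functions of $U$. If $U$ contains $(Y(0),Y(1),V)$ and this block is nondegenerate conditional on $(Z,D,X)$ --- e.g.\ $(Y,V)$ continuous while $Z$ is binary or finite-valued, which is exactly the overidentified situation this lemma is meant to cover --- then the conditional-expectation operator maps an infinite-dimensional space of functions of $U$ into functions of $(Z,D,X)$ whose dependence on $Z$ runs over finitely many values, so its kernel is nontrivial and completeness fails. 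Worse, your proposed repair --- enlarging $U$ by an auxiliary coordinate whose law given $(D,X)$ varies with $Z$ through a complete family --- moves in the wrong direction: completeness of an enlarged latent variable $(U,\tilde{U})$ for $Z$ \emph{implies} completeness of $U$ for $Z$ (restrict the injectivity requirement to functions depending on $U$ alone), so enlargement can only destroy completeness, never restore it. The degrees of freedom you identify (the counterfactual arm $Y(1-D)$, extra latent coordinates) are therefore of no help for (5).

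The paper resolves the tension by going in the opposite direction: it makes $U$ as coarse as possible, setting $U=Z$, so that completeness holds automatically (for $g$ a function of $U=Z$, $\mathbb{E}[g(U)\mid Z,D,X]=g(Z)$, which vanishes a.s.\ only if $g(U)=0$ a.s.). Latent ignorability is then secured not by making $(Y(d),V)$ measurable in $U$, but by a quantile-coupling construction on an extended space carrying iid uniforms $U_D,U_V,U_Y$ independent of the data: $D(z)=\mathbf{1}\{\mathbb{P}_P(D=1\mid Z=z,X)\ge U_D\}$, $V=F^{-1}_{V\mid U,X}(U_V\mid U,X)$, and $Y(d)=G_d^{-1}(U_Y\mid V,U,X)$ with $G_d$ the observed conditional cdf of $Y$ given $(V,D=d,Z=u,X)$, together with $V(d,z)=V$ and $Y(d,z)=Y(d)$. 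Then $(Y(d),V)$ depends only on $(U,X,U_Y,U_V)$ while $(D,Z)$ depends only on $(U,X,U_D)$, giving $(Y(d),V)\perp (D,Z)\mid (U,X)$, and the realized $(Y,D,V,Z,X)$ reproduce exactly the law $P$, hence also (\ref{eqn:neg-contr-bridge}). If you rework your argument, the fix is to replace ``put the potential outcomes inside $U$'' with ``set $U=Z$ and build the potential outcomes by conditional quantile transforms.''
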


Let \( T : L^2(V,D,X) \to L^2(Z,D,X) \) be the conditional expectation operator given by \( (Th) \equiv \mathbb{E}[ h(V,D,X) \mid Z,D,X ] \), and let the adjoint \( T^* : L^2(Z,D,X) \to L^2(V,D,X) \) be \( (T^*g) \equiv \mathbb{E}[ g(Z,D,X) \mid V,D,X ] \).

\begin{theorem} \label{thm:neg-contr-justid}
     A distribution $P$ is locally just identified by the negative control model $\mathbf{P}_{\operatorname{NC}}$ if and only if the closure of the range space of $T$ is the full space:
         \begin{align*}
             \bar{\mathcal{R}} = \textit{cl}(\{ f \in L^2(Z,D,X) : f = Th, \exists h \in L^2(V,D,X) \}) = L^2(Z,D,X),
         \end{align*}
    which is equivalent to the adjoint operator $T^*$ being injective.
\end{theorem}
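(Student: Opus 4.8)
The plan is to reduce the claim to a computation of the tangent space $\bar{T}(P)$ and then connect it to the operator $T$ by differentiating the conditional moment restriction. By Lemma \ref{lm:neg-contr-obs-equiv}, a distribution $P$ lies in $\mathbf{P}_{\operatorname{NC}}$ if and only if there exists $h \in L^2(V,D,X)$ with $\mathbb{E}[Y - h(V,D,X) \mid Z,D,X] = 0$; equivalently, writing $m(Z,D,X) = \mathbb{E}[Y \mid Z,D,X]$, the restriction is $m \in \mathcal{R}(T)$, where $T$ itself depends on $P$ through the conditional law of $V$ given $(Z,D,X)$. Fixing a solution $h$ at $P$ and setting $\rho = Y - h(V,D,X)$ so that $\mathbb{E}[\rho \mid Z,D,X]=0$, I would compute $\bar{T}(P)$ following the conditional-moment approach of \cite{ai2012semiparametric}, and then invoke the functional-analytic identity $\bar{\mathcal{R}} = \mathcal{N}(T^*)^{\perp}$ to obtain the stated equivalence with injectivity of $T^*$.

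First I would derive the tangent space. Take any regular path $\theta \mapsto P_{\theta,g} \in \mathbf{P}_{\operatorname{NC}}$ with score $g$, and let $h_\theta$ be a corresponding bridge function with derivative $\dot h$. For every test function $\phi \in L^2(Z,D,X)$, the identity $\mathbb{E}_{P_{\theta,g}}[(Y - h_\theta(V,D,X))\phi(Z,D,X)] = 0$ holds along the path; differentiating at $\theta = 0$ yields $\mathbb{E}[\rho\,g\,\phi] = \mathbb{E}[\dot h\,\phi]$. Since the right-hand side equals $\mathbb{E}[\phi\cdot(T\dot h)]$, the left-hand side equals $\mathbb{E}[\phi\cdot\mathbb{E}[\rho g \mid Z,D,X]]$, and $\phi$ is arbitrary, the score must satisfy $\mathbb{E}[\rho g \mid Z,D,X] = T\dot h \in \mathcal{R}(T)$. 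Taking closures, this gives the characterization
\begin{align*}
    \bar{T}(P) = \big\{ g \in L^2_0(P) : \mathbb{E}[\rho g \mid Z,D,X] \in \bar{\mathcal{R}} \big\}.
\end{align*}

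The remaining step is to show that this set equals $L^2_0(P)$ exactly when $\bar{\mathcal{R}} = L^2(Z,D,X)$. The ``if'' direction is immediate: if $\bar{\mathcal{R}}$ is the full space, the defining constraint is vacuous and $\bar{T}(P) = L^2_0(P)$. For the ``only if'' direction I would argue by contraposition, exhibiting a blocked score whenever $\bar{\mathcal{R}} \subsetneq L^2(Z,D,X)$. The map $g \mapsto \mathbb{E}[\rho g \mid Z,D,X]$ is surjective onto $L^2(Z,D,X)$: for any $\phi_0 \in L^2(Z,D,X)$, the choice $g = \rho\,\phi_0 / \mathbb{E}[\rho^2 \mid Z,D,X]$ lies in $L^2_0(P)$, because $\mathbb{E}[\rho \mid Z,D,X]=0$ makes it mean zero, and it satisfies $\mathbb{E}[\rho g \mid Z,D,X] = \phi_0$. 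Choosing $\phi_0 \in \bar{\mathcal{R}}^{\perp}\setminus\{0\}$ then produces $g \in L^2_0(P) \setminus \bar{T}(P)$, so $P$ is locally overidentified. Combining the two directions gives local just-identification if and only if $\bar{\mathcal{R}} = L^2(Z,D,X)$, and the standard orthogonal-complement identity $\bar{\mathcal{R}} = \mathcal{N}(T^*)^{\perp}$ shows this is equivalent to $\mathcal{N}(T^*) = \{0\}$, i.e.\ injectivity of the adjoint $T^*$.

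The main obstacle is the rigorous justification of the tangent-space characterization rather than the final equivalence. Differentiation delivers the necessary condition cleanly, but the converse—that every $g$ with $\mathbb{E}[\rho g \mid Z,D,X] \in \bar{\mathcal{R}}$ is attainable by a valid path in $\mathbf{P}_{\operatorname{NC}}$—requires constructing such paths, including approximating $\dot h$ when the conditional mean lands in $\bar{\mathcal{R}}\setminus\mathcal{R}(T)$; here the closure in the definition of $\bar{T}(P)$ is essential. Two further technical points must be handled: the bridge function need not be unique when $T$ is not injective, so one should check that the characterization is invariant to the chosen solution $h$ (the ambiguity enters only through $T\dot h$ and cancels); and the surjectivity construction presumes a mild nondegeneracy condition, e.g.\ $\mathbb{E}[\rho^2 \mid Z,D,X]$ bounded away from zero, under which $g$ is a bona fide $L^2_0(P)$ score. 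This condition fails only in the degenerate case where $Y$ is almost surely a function of $(V,D,X)$.
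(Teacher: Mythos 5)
Your reduction via Lemma \ref{lm:neg-contr-obs-equiv}, your derivative computation (in your notation $\mathbb{E}[\rho g \mid Z,D,X] = T\dot h$, matching the paper's $\nabla m(h_P)[h] = -Th$), and the closing step via $\bar{\mathcal{R}}^{\perp} = \mathcal{N}(T^{*})$ all coincide with the paper's proof. The genuine gap is in the direction ``$\bar{\mathcal{R}} = L^2(Z,D,X)$ implies local just-identification.'' You call this ``immediate,'' but it rests on the inclusion $\{g \in L^2_0(P): \mathbb{E}[\rho g \mid Z,D,X] \in \bar{\mathcal{R}}\} \subseteq \bar{T}(P)$, i.e., on the claim that every such $g$ is attainable (in the closed-linear-span sense) by paths $\theta \mapsto P_{\theta,g}$ that remain inside $\mathbf{P}_{\operatorname{NC}}$; establishing that requires constructing, for each candidate score, a perturbation of the joint law together with a bridge function $h_\theta$ solving the conditional moment restriction under each $P_{\theta,g}$. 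Your differentiation argument yields only the reverse inclusion (necessity), and you yourself flag the converse as ``the main obstacle'' without resolving it—so the proposal defers precisely the step on which the if-direction of the theorem stands. The paper closes this step by invoking Theorem 4.1 in \citet{chen2018overidentification}, which packages both halves of the tangent-space characterization for conditional moment restriction models under regularity conditions; its proof then consists only of computing $\nabla m$ and its range and applying that theorem, rather than re-deriving it. In effect you have rediscovered the structure of that result but not supplied its hard half; either carry out the path construction or cite the theorem.

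Two secondary issues. First, your necessity derivation implicitly assumes a selection $\theta \mapsto h_\theta$ of bridge functions that is differentiable at $\theta = 0$ with derivative $\dot h$; since the bridge function need not be unique (you note this) and a path in $\mathbf{P}_{\operatorname{NC}}$ only guarantees existence of some solution $h_\theta$ for each $\theta$, the existence of such a differentiable selection itself needs justification—this, too, is part of what the cited regularity conditions handle. Second, your blocked-score construction $g = \rho\,\phi_0/\mathbb{E}[\rho^2 \mid Z,D,X]$ for the only-if direction is correct and standard, but beyond the lower bound on $\mathbb{E}[\rho^2 \mid Z,D,X]$ that you acknowledge, the argument also needs the set $\{g : \mathbb{E}[\rho g \mid Z,D,X] \in \bar{\mathcal{R}}\}$ to be closed in $L^2_0(P)$ (so that it contains the closed linear span of attainable scores), which requires an upper bound such as $\mathbb{E}[\rho^2 \mid Z,D,X]$ bounded, ensuring continuity of $g \mapsto \mathbb{E}[\rho g \mid Z,D,X]$.
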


\citet{cui2024semiparametric} study the semiparametric negative-control model and propose a doubly robust estimator. They show that this estimator achieves semiparametric efficiency when both $T$ and $T^{*}$ are surjective, which in particular implies that $T^{*}$ is injective. Our Theorem \ref{thm:neg-contr-justid} provides a complementary perspective: when $T^{*}$ is injective (equivalently, when $T$ is surjective), the model is locally just identified, so any regular, asymptotically linear estimator attains the efficiency bound.

As noted by \cite{cui2024semiparametric}, requiring both $T$ and $T^*$ to be surjective can be a demanding condition in practice. When $V$ and $Z$ are finitely valued, this requirement implies that their sample spaces must have the same cardinality (as in the case studied by \cite{shi2020multiply}); similarly, when $V$ and $Z$ are continuous, it requires that they have the same dimension. If these requirements are not satisfied, the doubly robust estimator they propose may fail to achieve semiparametric efficiency. In what follows, we build on \cite{ai2012semiparametric} to derive the efficient influence function for the estimand $\mu(P)$ without imposing such restrictions.

The estimation of ATE can be formulated as the following sequential moment restriction model:
\begin{align*}
    \mathbb{E}\left[ h_d(V,X) - \mu_d \right] & = 0, d=0,1, \\
    \mathbb{E}[D(Y - h_1(V,X)) | Z,X] & = 0, \\
    \mathbb{E}[(1-D)(Y - h_0(V,X)) | Z,X] & = 0,
\end{align*}
where $\mu_1$ and $\mu_0$ are, respectively, the mean treated and untreated potential outcomes, and with a slight abuse of notation, we denote the nuisance function as $h=(h_1,h_0)'$, where $h_d(V,X)$ is shorthand for $h(V,d,X),d\in{0,1}$. The parameter ATE is equal to $\mu = \mu_1 - \mu_0$. 

We introduce some notation. Let $e_1 = (1,0)$ and $e_0 = (0,1)$. Define the conditional moment function
$\rho(h;Y,V,D,Z,X) = (D(Y - h_1(V,X)),(1-D)(Y - h_0(V,X)))'$.
Let $\Sigma(Z, X) = \mathbb{E}[\rho \rho' | Z, X]$ denote the conditional variance matrix, and write $\Sigma^{-1}(Z,X)$ for its inverse. For $d = 0,1$, define
$
\Gamma_d(Z, X) = \mathbb{E}[(h_d(V, X) - \mu_d)\rho' \mid Z, X] \Sigma^{-1}(Z,X),
$
which captures the conditional covariance between the moment functions, and is used for orthogonalization. Finally, let $L(Z, X)$ be the diagonal matrix with diagonal entries $-p(Z, X)$ and $-(1 - p(Z, X))$, where $p(Z, X) = \mathbb{E}[D \mid Z, X]$.

\begin{theorem} \label{thm:neg-contr-speb}
    The efficient influence functions for $\mu_1$ and $\mu_0$ in the negative control model are given by
    \begin{align*}
        \mathbb{EIF}(\mu_1) & = h_1(V,X) - \mu_1 - \Gamma_1 \rho - (LTH^{-1}T^*(e_1 - \Gamma_1 L)')'\Sigma^{-1}\rho,
    \end{align*}
    and 
    \begin{align*}
        \mathbb{EIF}(\mu_0) & = h_0(V,X) - \mu_0 - \Gamma_0 \rho - (LTH^{-1}T^*(e_0 - \Gamma_0 L)')'\Sigma^{-1}\rho,
    \end{align*}
    respectively, where $H = L(Z,X)T^* \Sigma^{-1}(Z,X,D) TL(Z,X)$, and $H^{-1}$ is the generalized inverse of $H$.
The efficient influence function for ATE is
    \begin{align*}
        \mathbb{EIF}(\mu) = \mathbb{EIF}(\mu_1) - \mathbb{EIF}(\mu_0).
    \end{align*}
    The semiparametric efficiency bound for ATE is given by the second moment of the efficient influence function.
\end{theorem}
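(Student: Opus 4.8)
The plan is to derive the efficient influence functions by treating the three displayed restrictions as a sequential conditional moment model and applying the general efficiency calculation of \citet{ai2012semiparametric}. Taking the moment formulation as justified by Lemma~\ref{lm:neg-contr-obs-equiv}, the finite-dimensional parameters of interest are $(\mu_1,\mu_0)$, each defined through an unconditional moment $\mathbb{E}[h_d(V,X)-\mu_d]=0$, while $h=(h_1,h_0)'$ is an infinite-dimensional nuisance pinned down by the NPIV-type conditional moment $\mathbb{E}[\rho(h)\mid Z,X]=0$. Because $\mu_d=\mathbb{E}[h_d(V,X)]$ depends on $h_d$, and $h_d$ is identified only through an ill-posed inverse problem, the efficient influence function must propagate the first-stage estimation error of $h$ into the target moment; the entire computation therefore reduces to characterizing the nuisance tangent directions generated by admissible perturbations of $h$ and projecting the naive influence function against them.

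First I would assemble the three linear-algebraic inputs that the Ai--Chen formula requires. (i) The naive (infeasible, $h$-known) influence function for $\mu_d$ is $h_d(V,X)-\mu_d$, read off the unconditional moment. (ii) Linearizing $\rho(h)=(D(Y-h_1),(1-D)(Y-h_0))'$ in $h$ along a direction $\delta=(\delta_1,\delta_0)'$ gives $-(D\delta_1,(1-D)\delta_0)'$; taking the conditional expectation given $(Z,X)$ and resolving the $D$ and $(1-D)$ weights by iterated expectations produces the linearized map from nuisance perturbations into the moment space, which is the conditional expectation operator $T$ composed with the propensity weighting $L$, with adjoint built from $T^*$. (iii) The conditional variance $\Sigma(Z,X)=\mathbb{E}[\rho\rho'\mid Z,X]$ supplies the optimal (GLS) weighting of the conditional moment. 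Combining the linearized operator, its adjoint, and $\Sigma^{-1}$ yields the information operator $H=LT^*\Sigma^{-1}TL$ for the nuisance $h$.

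Next I would form the efficient score. Following \citet{ai2012semiparametric}, I would (a) orthogonalize the naive influence function against the conditional moment by subtracting the term $\Gamma_d\rho$, where $\Gamma_d=\mathbb{E}[(h_d-\mu_d)\rho'\mid Z,X]$ captures the conditional covariance already explained by $\rho$, and (b) add the correction that accounts for efficient estimation of $h$ through the overidentified restriction. The correction is the solution of the normal equations for the least-favorable nuisance direction: the target direction $(e_d-\Gamma_d L)'$ is pulled back to the nuisance space by $T^*$, inverted against the information operator via $H^{-1}$, pushed forward by $LT$, and contracted against the optimally weighted residual $\Sigma^{-1}\rho$, giving precisely $(LTH^{-1}T^*(e_d-\Gamma_d L)')\Sigma^{-1}\rho$. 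Linearity of the unconditional moments then gives $\mathbb{EIF}(\mu)=\mathbb{EIF}(\mu_1)-\mathbb{EIF}(\mu_0)$, and since each influence function is mean zero, the efficiency bound for the ATE is its second moment $\mathbb{E}[\mathbb{EIF}(\mu)^2]$.

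The step I expect to be the main obstacle is handling the non-invertibility of $H$, which is unavoidable in the genuinely overidentified case of interest. By Theorem~\ref{thm:neg-contr-justid}, local overidentification is precisely the regime in which $T$ fails to be surjective (equivalently $T^*$ fails to be injective), so $H$ has a nontrivial null space and $H^{-1}$ must be interpreted as a generalized inverse. I would therefore need to verify two things: first, a compatibility (range) condition ensuring that the pulled-back direction $T^*(e_d-\Gamma_d L)'$ lies in the range of $H$, so that the generalized-inverse solution genuinely solves the normal equations; and second, that the resulting score is the orthogonal projection onto the tangent space, i.e.\ it is uncorrelated with every admissible nuisance perturbation and its inner product with each score reproduces the pathwise derivative of $\mu_d$. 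Establishing these orthogonality and range conditions—together with the adjoint identities relating $T$, $T^*$, $L$, and $\Sigma^{-1}$ in the weighted inner product—is where the delicate work lies; once they are in place, the efficiency bound follows by substituting into the \citet{ai2012semiparametric} formula and taking second moments.
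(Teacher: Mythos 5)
Your proposal follows essentially the same route as the paper's proof: formulate the problem as a sequential moment restriction model, orthogonalize the unconditional moment against $\rho$ via $\Gamma_d$, and characterize the least-favorable nuisance direction through the normal equations built from the information operator $H = LT^*\Sigma^{-1}TL$ with a generalized inverse, all within the framework of \citet{ai2012semiparametric}. The only notable difference is that the paper carries out the variational computation explicitly (solving the first-order condition for the optimal direction $r^*$ and the scalar $A$, computing the Fisher information, and multiplying the efficient score by the bound), while it does not undertake the range-compatibility and projection verifications you flag as the delicate step—it simply defines $H^{-1}$ as the generalized inverse and invokes Theorem 2.1 of \citet{ai2012semiparametric}.
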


In the proof of Theorem \ref{thm:neg-contr-speb}, we formally demonstrate that the efficient influence function reduces to the one in \cite{cui2024semiparametric} when $T$ and $T^*$ are bijective.

For efficient estimation, one may use either the optimally weighted minimum distance estimator proposed by \cite{ai2012semiparametric} or the influence function–based efficient estimator developed in \cite{chen2023efficient}. As the construction of estimators closely parallels that in the following section, we defer the detailed exposition and refer the reader to that discussion.

\section{Long-term causal inference via data combination} \label{sec:long-term}

Data combination provides a versatile toolkit for addressing measurement error, missing data, and treatment effect problems. See, e.g., \cite{chen2004semiparametric,chen2008semiparametric} for a unified framework. Recent work has focused on identifying and estimating long-term causal effects by combining experimental datasets that report only short-term outcomes with observational datasets that contain both short and long-term outcomes. For example, \cite{chen2023semiparametric} and \cite{athey2025experimental} explicitly show that their models are locally just identified. We now focus on the framework studied by \cite{imbens2025long}.

Following \cite{imbens2025long}, let the treatment indicator be binary, $D\in\{0,1\}$. Each unit has a long-term potential outcome $Y(d)$ and a vector of short-term potential outcomes $S(d)=\bigl(S_{1}(d),S_{2}(d),S_{3}(d)\bigr)$, where the subscript indicates the time of measurement.  The realized outcomes are $Y = Y(D)$ and $S = S(D)$.

Data come from two sources: an experimental sample and an observational sample.  Let $G\in\{E,O\}$ indicate the source.  In the experimental sample $(G=E)$, we observe only the short-term outcomes, whereas in the observational sample $(G=O),$ we observe both long- and short-term outcomes.  Thus, the observed variables are $(Y\,\mathbf 1\{G=O\},\, S,\, D,\, G).$
For clarity, we suppress the covariates from the presentation.

Besides the potential outcomes, there is another variable $U$ that denotes the latent confounders that account
for the association between treatment and potential outcomes in the observational data.

The following structural assumptions are imposed by \citet{imbens2025long}:
\begin{enumerate}[label=(\arabic*)]
    \item Observational data: the latent $U$ accounts for all confounding in the observational data: $(Y(d),S(d))\perp D | U, G=O$.
    \item Experiment: the treatment is independent of the potential outcomes and latent $U$ in the experimental data: $(Y(d),S(d),U) \perp D | G=E$.
    \item External validity: the short-term potential outcomes and latent $U$ are independent with the data source $G$: $(S(d),U) \perp G$.
    \item Sequential outcomes: the long-term $Y(d)$ and last period $S_3(d)$ are independent with the first period $S_1(d)$ conditional on the second period $S_2(d)$ and latent $U$ in the observational data: $(Y(d),S_3(d)) \perp S_1(d) | S_2(d),U, G=O$.
    \item Completeness: given $S_2,D,$ and $G=O$, the distribution of latent $U$ is complete for $S_1$.
    \item Outcome bridge function: there exists a function $h$ such that 
    \begin{align} \label{eqn:outcome-bridge-unobs}
        \mathbb{E}[Y|S_2,D,U,G=O] = \mathbb{E}[h(S_3,S_2,D)|S_2,D,U,G=O].
    \end{align}
\end{enumerate}

Under the maintained assumptions, \citet{imbens2025long} show that the bridge function also satisfies the following observable conditional moment:
\begin{align} \label{eqn:outcome-bridge-obs}
    \mathbb{E}[Y|S_2,S_1,D,G=O] = \mathbb{E}[h(S_3,S_2,D)|S_2,S_1,D,G=O].
\end{align}
Denote the above conditional expectation operator by $K$. That is, for any $h \in L^2(S_3,S_2,D)$, $(Kh)(S_2,S_1,D) = \mathbb{E}[\mathbf{1}\{G=O\}h(S_3,S_2,D)|S_2,S_1,D]$. Denote its adjoint operator as $K^*$, i.e., $(K^*g)(S_3,S_2,D) = \mathbb{E}[\mathbf{1}\{G=O\}g(S_2,S_1,D)|S_3,S_2,D]$.
In fact, any $h$ satisfying (\ref{eqn:outcome-bridge-obs}) satisfies (\ref{eqn:outcome-bridge-unobs}), which leads to the identification of the Average Long-Term Treatment Effect (ALTTE) for the observational population:\footnote{In addition to the outcome-bridge approach, \citet{imbens2025long} introduce an alternative identification strategy, the selection-bridge, for identifying the long-term treatment effect.  When the two bridge functions are used jointly, the resulting identification is ``doubly robust.''  Since our paper focuses on semiparametric efficiency rather than the identification of causal parameters, we omit the analysis of the selection bridge function to maintain focus.}
\begin{align*}
    \mu(P) = \mathbb{E}_P[h(S_3,S_2,D)|D=1,G=E] - \mathbb{E}_P[h(S_3,S_2,D)|D=0,G=E].
\end{align*}

We denote the statistical model induced by the above identification assumptions as $\mathbf{P}_{\operatorname{LT}}$.
In the following, we show that $\mathbf{P}_{\operatorname{LT}}$ is fully characterized by the nonparametric-IV-type moment (\ref{eqn:outcome-bridge-obs}). This result can be used to characterize when the model is locally just identified.

\begin{lemma} \label{lm:long-term-obs-equiv}
    The family of probability distributions of $(\{Y(d),S(d):d=0,1\},U,D,G)$ satisfying the above identification assumptions is observationally equivalent to the family of probability distributions of $(Y\mathbf{1}\{G=O\},S,D,G)$ satisfying the moment condition (\ref{eqn:outcome-bridge-obs}).
\end{lemma}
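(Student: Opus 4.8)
The plan is to establish observational equivalence as two set inclusions. Because $\mathbf{P}_{\operatorname{LT}}$ is, by construction, the image of the structural family under the coordinate map $(\{Y(d),S(d)\}_d,U,D,G)\mapsto(Y\mathbf{1}\{G=O\},S,D,G)$, it suffices to show (a) \emph{necessity}: every structural law satisfying (1)--(6) pushes forward to an observable law obeying (\ref{eqn:outcome-bridge-obs}); and (b) \emph{sufficiency}: every observable law obeying (\ref{eqn:outcome-bridge-obs}) is the pushforward of some structural law satisfying (1)--(6). Together these give $\mathbf{P}_{\operatorname{LT}}=\{P:(\ref{eqn:outcome-bridge-obs})\text{ holds}\}$.

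For necessity I would reproduce the iterated-expectations argument underlying \citet{imbens2025long}. Restrict to the observational arm and condition on $D=d$, so that consistency identifies the realized $(Y,S)$ with the potential $(Y(d),S(d))$; assumption (1) ensures the conditional independence in (4) persists within the $\{D=d\}$ subpopulation. Then (4) yields that $Y$ and $h(S_3,S_2,D)$ have the same conditional mean given $(S_1,S_2,D,U,G=O)$ as given $(S_2,D,U,G=O)$, so the latent bridge (\ref{eqn:outcome-bridge-unobs}) upgrades to $\mathbb{E}[Y\mid S_1,S_2,D,U,G=O]=\mathbb{E}[h(S_3,S_2,D)\mid S_1,S_2,D,U,G=O]$; taking the conditional expectation given $(S_1,S_2,D,G=O)$ integrates out $U$ and delivers (\ref{eqn:outcome-bridge-obs}). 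This step is routine.

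Sufficiency is where the work lies. Given an observable $P$ satisfying (\ref{eqn:outcome-bridge-obs}), I would construct a structural $P^*$ with pushforward $P$, exploiting the principle that the only observable restriction imposed by (1)--(6) is (\ref{eqn:outcome-bridge-obs}); the remaining assumptions are to be arranged by how the latent variable and the counterfactual coordinates are specified. Concretely I would match the within-arm, within-source laws of the realized quantities to $P$; fill in the counterfactual long-term outcome $Y(d)$ in the experimental arm freely, since only $Y\mathbf{1}\{G=O\}$ is observed; and construct the latent confounder $U$ through a regular conditional distribution (disintegration) designed so that the Markov restriction (4), the completeness condition (5), and the latent bridge (\ref{eqn:outcome-bridge-unobs}) all hold, the last reducing to (\ref{eqn:outcome-bridge-obs}) by the choice of $U$. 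Experimental ignorability (2) is then imposed directly through the assignment mechanism, and external validity (3) is obtained by equating the observational potential short-term law $P(S(d)\mid G=O)$ with the experimental law $P(S\mid D=d,G=E)$.

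The main obstacle is the sufficiency construction, specifically exhibiting a single pre-treatment latent $U$ that simultaneously satisfies all of (1)--(6). The tension is that (\ref{eqn:outcome-bridge-unobs}) collapses to (\ref{eqn:outcome-bridge-obs}) most cleanly when $(S_2,D,U)$ is informationally equivalent to $(S_2,S_1,D)$, which pulls $U$ toward the post-treatment $S_1$, whereas the deconfounding assumptions (1)--(3) require $U$ to be pre-treatment; reconciling these forces $U$ to be an appropriately enriched latent type rather than an observed coordinate. A further delicate point is that external validity (3) links short-term potential outcomes across sources whose observational marginal is masked by confounding, so one must produce laws $P(U\mid G=O)$, $P(U\mid D,G=O)$ and a kernel $P(S(d)\mid U,G=O)$ reproducing both the observed observational conditional and the target external-validity marginal, which is a coupling/disintegration existence argument requiring $U$ sufficiently rich, with completeness (5) precluding any degenerate choice. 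Once such a $U$ is shown to exist, verifying (1)--(6) reduces to bookkeeping with conditional independences.
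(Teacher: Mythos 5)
Your high-level structure (necessity plus sufficiency as two inclusions) matches the logic of the lemma, and your necessity sketch is fine, but the attempt has a genuine gap exactly where you place ``the main obstacle'': you never construct the structural law in the sufficiency direction, and the guidance you give points away from a construction that works. You assert that the deconfounding assumptions force $U$ to be ``an appropriately enriched latent type rather than an observed coordinate,'' and that completeness (5) ``precludes any degenerate choice.'' Neither claim is correct. Assumptions (1)--(6) are nothing more than conditional-independence, completeness, and bridge restrictions on the joint law of the structural variables; no assumption formalizes that $U$ is ``pre-treatment,'' so $U$ may be taken to be a function of observables. The paper's proof exploits exactly this: on $\{G=O\}$ it sets $U^O=(S_1,S_2)$ and $S^O(d)=S$, and generates $Y^O(d)=F^{-1}_{Y\mid S,D,G}(U_{[0,1]}\mid S,d,O)$ from a uniform $U_{[0,1]}$ independent of everything else. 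With this choice, condition (4) holds trivially because conditioning on $U^O$ fixes $S_1$; condition (5) holds trivially because $\mathbb{E}[g(U^O)\mid S_1,S_2,D,G=O]=g(S_1,S_2)$, so a vanishing conditional mean forces $g\equiv 0$ --- a point-mass-but-$S_1$-varying conditional law of $U$ is maximally complete, the opposite of what you claim; and condition (1) holds because, given $U^O$, the only remaining randomness in the constructed potential outcomes is the independent uniform. The experimental arm is then handled by taking $(Y^E(d),S^E(d),U^E)$ to be an independent copy of the observational triplet, which delivers (2) and (3) jointly; your proposal to match only the per-$d$ marginals $P(S(d)\mid G=O)$ with $P(S\mid D=d,G=E)$ would not obviously give the joint independence $(S(d),U)\perp G$ that (3) requires. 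Finally, condition (6) follows from Lemma 1 of \citet{imbens2025long}: any $h$ satisfying (\ref{eqn:outcome-bridge-obs}) also satisfies (\ref{eqn:outcome-bridge-unobs}).

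In short, the existence argument you defer --- ``once such a $U$ is shown to exist, verifying (1)--(6) reduces to bookkeeping'' --- is the entire content of the lemma, and your coupling/disintegration route makes it look harder than it is. The tension you describe is resolved not by an enriched latent type but by the observation that nothing prevents identifying $U$ with the observed $(S_1,S_2)$ itself, after which every structural restriction can be verified by elementary arguments.
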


\begin{theorem} \label{thm:long-term-justid}
     A distribution $P$ is locally just identified by the long-term causal inference model $\mathbf{P}_{\operatorname{LT}}$ if and only if the closure of the range space of $K$ is the full space:
         \begin{align*}
             \bar{\mathcal{R}} = \textit{cl}(\{ f \in L^2(S_2,S_1,D) : f = Kh, \exists h \in L^2(S_3,S_2,D) \}) = L^2(S_2,S_1,D),
         \end{align*}
    which is equivalent to the adjoint operator $K^*$ being injective.
\end{theorem}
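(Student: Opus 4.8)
The plan is to carry out the entire argument inside the induced statistical model. By Lemma~\ref{lm:long-term-obs-equiv}, $\mathbf{P}_{\operatorname{LT}}$ coincides with the family of observable distributions $P$ of $(Y\mathbf{1}\{G=O\},S,D,G)$ for which the conditional moment restriction (\ref{eqn:outcome-bridge-obs}) admits a solution $h\in L^2(S_3,S_2,D)$. Multiplying (\ref{eqn:outcome-bridge-obs}) by $P(G=O\mid S_2,S_1,D)$ shows this is equivalent to requiring that the observable function $\phi_P(S_2,S_1,D):=\mathbb{E}_P[Y\mathbf{1}\{G=O\}\mid S_2,S_1,D]$ lie in the range of $K$. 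Thus, up to the nuisance $h$, membership in $\mathbf{P}_{\operatorname{LT}}$ reduces to a single range restriction $\phi_P\in\mathcal{R}(K)$, and the task becomes deciding when this restriction is locally nonbinding.

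First I would linearize the restriction along a path $\theta\mapsto P_{\theta,g}$ with score $g\in L^2_0(P)$, allowing the bridge function to vary as $\theta\mapsto h_\theta$ with $h_0=h$ the solution at $P$. Writing $\phi_\theta-K_\theta h=\mathbb{E}_{P_\theta}[(Y-h)\mathbf{1}\{G=O\}\mid S_2,S_1,D]$ and differentiating at $\theta=0$, the first-order condition for the path to remain in $\mathbf{P}_{\operatorname{LT}}$ is the solvability in $\dot h$ of $\Lambda g=K\dot h$, where $\Lambda g:=\mathbb{E}[(Y-h)\mathbf{1}\{G=O\}\,g\mid S_2,S_1,D]$ is the derivative of the residual conditional mean holding $h$ fixed. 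The key simplification is that at the true $P$ the residual $(Y-h)\mathbf{1}\{G=O\}$ has conditional mean zero given $(S_2,S_1,D)$, so the usual conditioning-adjustment term in the derivative of a conditional expectation drops out and $\Lambda$ is exactly this conditional-covariance map. Consequently the tangent space is $\bar{T}(P)=\{g\in L^2_0(P):\Lambda g\in\overline{\mathcal{R}(K)}\}$, where the closure enters because $\bar{T}(P)$ is closed by definition.

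It then remains to show that $\bar{T}(P)=L^2_0(P)$ exactly when $\overline{\mathcal{R}(K)}=L^2(S_2,S_1,D)$. The ``if'' direction is immediate, since $\Lambda g\in\overline{\mathcal{R}(K)}$ holds automatically once the range is the whole space. For ``only if,'' I would invoke a mild nondegeneracy (overlap) condition guaranteeing that $\Lambda$ has dense range in $L^2(S_2,S_1,D)$; then $\bar{T}(P)=L^2_0(P)$ forces $L^2(S_2,S_1,D)=\overline{\operatorname{range}\Lambda}\subseteq\overline{\mathcal{R}(K)}$, giving equality. Finally, the stated equivalence with injectivity of $K^*$ is the standard Hilbert-space identity $\overline{\mathcal{R}(K)}=\mathcal{N}(K^*)^\perp$, so the range is the full space precisely when $\mathcal{N}(K^*)=\{0\}$.

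The main obstacle I anticipate is the rigorous passage from the formal linearization to genuine submodels: one must exhibit, for each admissible score $g$, an actual differentiable-in-quadratic-mean path $\theta\mapsto P_{\theta,g}\in\mathbf{P}_{\operatorname{LT}}$, which requires constructing a compatible path $\theta\mapsto h_\theta$ of bridge functions and controlling the gap between the exact range $\mathcal{R}(K)$ and its closure $\overline{\mathcal{R}(K)}$ (exact solvability of $\phi_\theta=K_\theta h_\theta$ versus mere membership in the closure). Establishing density of the range of $\Lambda$ for the ``only if'' direction is the other delicate point. Since $K$ here plays exactly the role that $T$ plays in the negative control model, these steps mirror the proof of Theorem~\ref{thm:neg-contr-justid}, and I would reuse that construction, substituting the operator $K$ together with the conditioning variables $(S_2,S_1,D)$ and $(S_3,S_2,D)$.
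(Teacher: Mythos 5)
Your proposal is correct and takes essentially the same route as the paper: reduce via Lemma~\ref{lm:long-term-obs-equiv} to the single conditional moment restriction, linearize in $h$ to identify the derivative operator $-K$ and its range, and conclude by the range-closure criterion of Theorem 4.1 in \citet{chen2018overidentification} together with the duality $\overline{\mathcal{R}(K)} = \mathcal{N}(K^*)^{\perp}$. The only difference is presentational: the internals you sketch (the tangent-space characterization, the path construction, and the nondegeneracy ensuring $\Lambda$ has dense range) are precisely what that cited theorem packages under its regularity conditions, and the proof of Theorem~\ref{thm:neg-contr-justid} that you propose to mirror is itself just an invocation of that theorem, which is how the paper closes the gaps you flag.
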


Theorem 7 in \citet{imbens2025long} characterizes the semiparametric efficiency bound and proposes an efficient estimator for ALTTE under the condition that the operator $K$ is bijective. In this case, the adjoint operator $K^{*}$ is injective, so the model is locally just identified according to our Theorem \ref{thm:long-term-justid}, and any regular, asymptotically linear estimator attains the efficiency bound. In more practical settings, however, $K^{*}$ may fail to be injective—e.g., when $S_{3}$ is not sufficiently rich to capture the full variation in $S_{1}$—leading the model to be locally overidentified. Our results highlight that in such cases, additional efficiency gains are possible beyond the estimator of \citet{imbens2025long}.

To obtain the semiparametric efficiency bound in the general setting in which $K$ is not necessarily bijective, we formulate the estimation of ALTTE as the following sequential moment restriction model:
\begin{align*}
    \mathbb{E}\left[ \mathbf{1}\{G=E\} D \left( h(S_3,S_2,D) - \mu_1 \right) \right] & = 0, \\
    \mathbb{E}\left[ \mathbf{1}\{G=E\} (1-D) \left( h(S_3,S_2,D) - \mu_0 \right) \right] & = 0, \\
    \mathbb{E}[\mathbf{1}\{G=O\}(Y - h(S_3,S_2,D)) | S_2,S_1,D] & = 0,
\end{align*}
where $\mu_1$ and $\mu_0$ are respectively the mean treated/untreated potential outcome. The parameter ALTTE is equal to $\mu = \mu_1 - \mu_0$. 

\begin{theorem} \label{thm:long-term-speb}
    The efficient influence functions for $\mu_1$ and $\mu_0$ in the long term causal inference model are given by
    \begin{align*}
        \mathbb{EIF}(\mu_1) & = \frac{\mathbf{1}\{G=E\} D\left( h(S_3,S_2,D) - \mu_1 \right)}{\mathbb{P}(G=E,D=1)} \\
        & + \frac{ \mathbf{1}\{G=O\}[KM^{-1} \pi(S_3,S_2,D)] D \Sigma_2^{-1}(Y - h(S_3,S_2,D))}{\mathbb{P}(G=E,D=1)}
    \end{align*}
    and 
    \begin{align*}
        \mathbb{EIF}(\mu_0) & = \frac{\mathbf{1}\{G=E\} (1-D)\left( h(S_3,S_2,D) - \mu_0 \right)}{\mathbb{P}(G=E,D=0)} \\
        & + \frac{ \mathbf{1}\{G=O\}[KM^{-1} \pi(S_3,S_2,D)] (1-D)\Sigma_2^{-1}(Y - h(S_3,S_2,D))}{\mathbb{P}(G=E,D=0)},
    \end{align*}
    respectively, where $M = (K^* \Sigma_2^{-1}(S_2,S_1,D) K)$, $M^{-1}$ is the generalized inverse of $M$, $\Sigma_2$ is the conditional variance of the conditional moment:
\begin{align*}
    \Sigma_2(S_2,S_1,D) = \mathbb{E}[\mathbf{1}\{G=O\}(Y - h(S_3,S_2,D))^2 | S_2,S_1,D],
\end{align*} 
and $\pi(S_3,S_2,D) = \mathbb{P}(G=E|S_3,S_2,D)$.
The efficient influence function for ALTTE is
    \begin{align*}
        \mathbb{EIF}(\mu) = \mathbb{EIF}(\mu_1) - \mathbb{EIF}(\mu_0).
    \end{align*}
    The semiparametric efficiency bound for ALTTE is given by the second moment of the efficient influence function:
    \begin{align} \label{eqn:speb-long-term}
        & \frac{1}{\mathbb{P}(G=E)^2} \mathbb{E}\left[ \mathbf{1}\{G=E\} \left( \frac{D - p_E}{p_E(1-p_E)} (h(S_3,S_2,D) - \mu(D))\right)^2 \right] \nonumber \\
        + & \frac{1}{\mathbb{P}(G=O)^2} \left\lVert M^{-1/2} \left( \frac{D - p_O}{p_O(1-p_O)} \frac{\mathbb{P}(G=O|D)}{\mathbb{P}(G=E|D)} \pi(S_3,S_2,D) \right) \right\lVert^2,
    \end{align}
    where $p_E = \mathbb{P}(D=1|G=E)$, $p_O = \mathbb{P}(D=1|G=O)$, $\mu(D) = D\mu_1 + (1-D)\mu_0$, $\lVert \cdot \rVert$ is the $L^2$-norm in the Hilbert space.
\end{theorem}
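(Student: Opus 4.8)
The plan is to apply the semiparametric efficiency framework for sequential moment restriction models developed in \citet{ai2012semiparametric}, treating $\mu = (\mu_1,\mu_0)'$ as the finite-dimensional parameter of interest and the bridge function $h \in L^2(S_3,S_2,D)$ as the infinite-dimensional nuisance. The model consists of the two unconditional moments defining $\mu$ together with the conditional moment restriction $\mathbb{E}[\mathbf{1}\{G=O\}(Y-h)|S_2,S_1,D]=0$, which pins down $h$ through the operator equation $Kh = \mathbb{E}[\mathbf{1}\{G=O\}Y|S_2,S_1,D]$. By Lemma \ref{lm:long-term-obs-equiv}, the statistical model $\mathbf{P}_{\operatorname{LT}}$, and hence its tangent space, is fully characterized by this conditional moment, so the efficiency calculation reduces to a conditional-moment problem with the operator $K$ and its adjoint $K^*$.

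First I would compute the pathwise derivative of $\mu_1$ along a regular parametric submodel with score $g$, decomposing it into a \emph{direct} effect, coming from the explicit dependence of $\mathbb{E}[\mathbf{1}\{G=E\}D(h-\mu_1)]=0$ on the distribution, and an \emph{indirect} effect operating through the induced perturbation $\dot h$ of the bridge function. The direct effect produces the first, bridge-known, term $\mathbf{1}\{G=E\}D(h-\mu_1)/\mathbb{P}(G=E,D=1)$. To handle the indirect effect I would differentiate the conditional moment restriction along the submodel, which yields the linear constraint relating $K\dot h$ to the score $g$; this is the inverse-problem step governing how nuisance perturbations are transmitted. Taking the conditional expectation of $\mathbf{1}\{G=E\}$ given $(S_3,S_2,D)$ shows that the Riesz representer of $\partial\mu_1/\partial h$ in $L^2(S_3,S_2,D)$ is the reweighting function $\pi(S_3,S_2,D)=\mathbb{P}(G=E|S_3,S_2,D)$ (times $D$), which reflects the data-combination structure: $h$ is estimated on the observational sample $(G=O)$ while $\mu_1$ is a functional of the experimental population, and $K$ together with $\pi$ transports information between the two.

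Next I would represent the total derivative as an inner product $\mathbb{E}[\psi g]$ with the score and identify the efficient influence function by projecting onto the tangent space, equivalently by selecting the least-favorable nuisance direction that minimizes the asymptotic variance subject to the moment constraint. Solving the resulting normal equations introduces the optimal weighting $\Sigma_2^{-1}$ for the conditional moment and the generalized inverse $M^{-1}$ of $M = K^*\Sigma_2^{-1}K$, giving the correction term $\mathbf{1}\{G=O\}[KM^{-1}\pi]D\Sigma_2^{-1}(Y-h)/\mathbb{P}(G=E,D=1)$. Repeating the argument with $(1-D)$ in place of $D$ yields $\mathbb{EIF}(\mu_0)$, and linearity of the target in $\mu_1,\mu_0$ gives $\mathbb{EIF}(\mu)=\mathbb{EIF}(\mu_1)-\mathbb{EIF}(\mu_0)$. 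The efficiency bound is then the second moment of $\mathbb{EIF}(\mu)$, which I would simplify to the two-term form (\ref{eqn:speb-long-term}): the experimental and observational contributions are orthogonal because $\mathbf{1}\{G=E\}\mathbf{1}\{G=O\}=0$, the experimental piece collapses to the stated $\tfrac{D-p_E}{p_E(1-p_E)}(h-\mu(D))$ expression after substituting $\mathbb{P}(G=E,D=d)=\mathbb{P}(G=E)\,p_E^{d}(1-p_E)^{1-d}$, and the observational piece reduces to the $M^{-1/2}$-norm using the idempotence of $M^{-1/2}MM^{-1/2}$ on the range of $M$.

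The main obstacle is the non-invertibility of $K$, which is exactly the locally overidentified case that motivates the paper: when $K$ fails to be surjective, $M = K^*\Sigma_2^{-1}K$ is singular, so $M^{-1}$ must be read as a generalized inverse acting on the closure of the range of $K^*$. One must verify that the representer $D\pi$ lies in this range (or its closure) so that $M^{-1}\pi$ is well-defined, that the residuals in the least-favorable-direction problem vanish, and that the constructed influence function genuinely belongs to the tangent space rather than being merely a gradient. This functional-analytic bookkeeping with $K$, $K^*$, and the range/null-space decomposition is where the care concentrates; the remaining manipulations are routine conditional-expectation calculations.
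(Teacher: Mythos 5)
Your proposal is correct and follows essentially the same route as the paper: both cast the problem as a sequential moment restriction model, exploit the orthogonality $\mathbf{1}\{G=E\}\mathbf{1}\{G=O\}=0$, identify $\pi(S_3,S_2,D)D$ as the representer of the derivative of $\mu_1$ in the nuisance direction, solve the least-favorable-direction problem to obtain the $\Sigma_2^{-1}$ weighting and the generalized inverse $M^{-1}=(K^*\Sigma_2^{-1}K)^{-1}$, and assemble the efficient influence function and bound exactly as in \citet{ai2012semiparametric}. The only differences are presentational (you phrase the calculation via pathwise derivatives and tangent-space projection, while the paper directly minimizes the Fisher-information objective and solves the first-order condition for $r^*$), so no substantive gap remains.
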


    The first component of the efficiency bound in (\ref{eqn:speb-long-term}) coincides with the corresponding term in Theorem 7 of \cite{imbens2025long}, whereas the second component generally differs. However, when $K$ is bijective, this second term reduces to the corresponding expression in Theorem 7 of \cite{imbens2025long}.\footnote{A formal proof of this statement is given in the proof of Theorem \ref{thm:long-term-speb}.}

Below, we introduce two efficient estimators for the ALTTE. First, we consider the efficient influence function-based estimator proposed by \citet{chen2023efficient}, which automatically satisfies the Neyman orthogonality condition with respect to the nuisance parameters. We focus on the efficient estimator for $\mu_1$; the estimator for $\mu_0$ is symmetric. Taking the difference between the two estimators yields an efficient estimator for the ALTTE. In the proof of Theorem \ref{thm:long-term-speb}, we have shown that the efficient influence function can be written in the following form:
\begin{align*}
    \mathbb{EIF}(\mu_1) & = \frac{\mathbf{1}\{G=E\} D\left( h(S_3,S_2,D) - \mu_1 \right)}{\mathbb{P}(G=E,D=1)} \\
        & - \frac{ \mathbf{1}\{G=O\}\mathbb{E}[\mathbf{1}\{G=O\}v^*|S_2,S_1,D] \Sigma_2^{-1}(Y - h(S_3,S_2,D))}{\mathbb{P}(G=E,D=1)},
\end{align*}
where $v^*$ is the Riesz representer defined as
\begin{align} \label{eqn:vstar}
    v^* = \frac{\mathbb{P}(G=E,D=1)^2\text{Var}(h(S_3,S_2,D) | G=E,D=1)}{\mathbb{P}(G=E,D=1) + \mathbb{E}[\pi(S_3,S_2,D)D r^*(S_3,S_2,D)]} r^*(S_3,S_2,D),
\end{align}
with $r^*$ being the minimizer of the following objective function
\begin{align} \label{eqn:r0}
    & \frac{\left(\mathbb{P}(G=E,D=1) + \mathbb{E} \left[ \mathbf{1}\{G=E\} D r(S_3,S_2,D) \right] \right)^2}{\mathbb{P}(G=E,D=1)\text{Var}(h(S_3,S_2,D) | G=E,D=1))} \nonumber \\
    + & \mathbb{E} \left[ (\mathbb{E}[\mathbf{1}\{G=O\}r(S_3,S_2,D)|S_2,S_1,D])^2 \Sigma_2^{-1}(S_2,S_1,D) \right].
\end{align}
In implementation, given an iid sample $(Y_i \mathbf{1}\{G_i=O\}, S_i, D_i, G_i)$, one can first obtain a preliminary consistent estimator $\hat{h}$ for $h$, for example, using the conditional moment condition with identity weighting. Then, we can construct the following efficient influence function-based estimator:
\begin{align*}
    \hat{\mu}_1 & = \frac{\sum_{i=1}^n \mathbf{1}\{G_i=E\} D_i \hat{h}(S_{3i},S_{2i},D_i) }{\sum_{i=1}^n \mathbf{1}\{G_i=E\}D_i} \\
    & + \frac{\sum_{i=1}^n \mathbf{1}\{G_i=O\}\mathbb{E}[\mathbf{1}\{G_i=O\}\hat{v}^*|S_{2i},S_{1i},D_i] \hat{\Sigma}_2^{-1}(S_{2i},S_{1i},D_i)(Y_i - \hat{h}(S_{3i},S_{2i},D_i))}{\sum_{i=1}^n \mathbf{1}\{G_i=E\}D_i},
\end{align*}
where $\hat{v}^*$ is estimated as the sample analogue of (\ref{eqn:vstar}) and (\ref{eqn:r0}) over the sieve space.

Second, we construct an estimator based on the optimally weighted minimum distance method. The analysis of the estimator
as a minimum distance problem is a specialization of \cite{ai2003efficient,ai2007estimation,ai2012semiparametric,chenpouzo2015sieve}. We first obtain a consistent estimator $\hat{\Sigma}_2$ using a preliminary estimate of $h$, and then estimate $\hat{h}$ by minimizing
\begin{align*}
\frac{1}{n} \sum_{i=1}^n \hat{\Sigma}_2^{-1}(S_{2i},S_{1i},D_i)\left[\hat{\mathbb{E}}[(Y_i - \hat{h}(S_{3i},S_{2i},D_i)) \mid S_{2i},S_{1i},D_i]\right]^2
\end{align*}
over the sieve space, where $\hat{\mathbb{E}}[\cdot \mid S_2, S_1, D]$ denotes a consistent nonparametric estimator of the corresponding conditional expectation. The resulting estimator for $\mu_1$ is
\begin{align*}
\hat{\mu}_1 = \frac{\sum_{i=1}^n \mathbf{1}\{G_i=E\} D_i \hat{h}(S_{3i},S_{2i},D_i)}{\sum_{i=1}^n \mathbf{1}\{G_i=E\} D_i}.
\end{align*}

\section{Numerical studies} \label{sec:numerical}

\subsection{Empirical application: long-term effect of job training} 

In this section, we apply the efficient estimator developed in Section~\ref{sec:long-term} to study the long-term effect of job training on earnings. We use the Job Corps (JC) dataset as the experimental sample and the Survey of Income and Program Participation (SIPP) as the observational sample.

The JC dataset is obtained from the R package \texttt{causalweight} \citep{causalweight} and originates from the Job Corps National Study, a large-scale randomized evaluation conducted by the US Department of Labor \citep{schochet2008job}. It reports random assignment to the training program, which we take as the treatment $D$, as well as post-treatment log monthly earnings measured one, two, and three years after assignment, which form the short-term outcomes $S_1, S_2, S_3$. The covariates include years of schooling, age, race, and sex at birth.

The SIPP data are publicly available from the US Census Bureau.\footnote{See the Survey of Income and Program Participation, US Census Bureau, \url{https://www.census.gov/programs-surveys/sipp.html}.}
 We focus on the 1996 to 2000 panel to ensure close temporal alignment with the JC study. This panel consists of twelve waves (each wave corresponds to a four-month period), with training-related information reported in the topical module of Wave~2. We define the treatment $D$ as participation in any training intended to help train for a new job or improve skills in the current job during the past year. Short-term outcomes $(S_1,S_2,S_3)$ are defined as earnings measured in Waves~2, 5, and~8. The long-term outcome $Y$ is earnings measured in the final wave of the panel. The set of covariates is chosen to match those used in the JC dataset. In the end, the observational sample contains 28{,}232 untreated and 9{,}629 treated observations, while the experimental sample contains 3{,}663 untreated and 5{,}577 treated observations.

We implement both the estimator of \cite{imbens2025long} and the efficient estimator proposed in Section~\ref{sec:long-term}. The sieve spaces for the nuisance function $h$ are constructed using fourth-order spline bases. The probability $\pi$ is estimated using a sieve logit with spline basis, and the conditional covariance matrix is estimated via $K$-nearest neighbors (KNN). The results are reported in Table \ref{tab:empirical}.

\begin{table}[htbp]
\centering
\caption{Estimates of the long-term effect}
\label{tab:empirical}
\begin{tabular}{lccc}
\toprule
 & IKMW & Efficient & Hausman test \\
\midrule
Estimate & 0.198 & 0.181 &  \\
(SE)     & (0.036) & (0.028) & [0.450] \\
\bottomrule
\end{tabular}
\begin{tablenotes}
\footnotesize
\item Notes: The table reports estimates of the long-term effect. The first two columns correspond to the estimator of \cite{imbens2025long} and our efficient estimator, respectively. Standard errors are reported in parentheses. The Hausman test column reports $p$ values in square brackets.
\end{tablenotes}
\end{table}

Both estimators indicate that job training increases long term earnings by nearly 20\%, a magnitude that is broadly consistent with existing evidence in the literature \citep{schochet2008job}.
 The efficient estimator achieves a reduction in the standard error exceeding 20\%, reflecting a substantial gain in statistical precision. The Hausman test yields no evidence against the model specification of this empirical study.\footnote{In cases of model misspecification, one can utilize the theory developed in \citet{ai2007estimation}.}

\subsection{Simulations}

We run a simulation study with data-generating processes (DGP) that follows the simulation study in \cite{imbens2025long}. We first generate four covariates $\tilde X$ from a multivariate normal distribution with mean zero and correlation coefficients calibrated to match those estimated in the empirical application. We also generate latent unobservables $U$ from a multivariate normal distribution with mean zero, variance $0.5$, and zero covariance across components.

Potential intermediate and final outcomes are generated according to the recursive system
\begin{align*}
Y(a) &= \tau_y a + \alpha_y^{\top}\tilde S_3(a) + \beta_y^{\top}\tilde X + \gamma_y^{\top} U + \epsilon_y, \\
\tilde S_j(a) &= \tau_j a + \alpha_j \tilde S_{j-1}(a) + \beta_j^{\top}\tilde X + \gamma_j^{\top} U + \epsilon_j, \qquad j\in\{3,2\}, \\
\tilde S_1(a) &= \tau_1 a + \beta_1^{\top}\tilde X + \gamma_1^{\top} U + \epsilon_1,
\end{align*}
where $\tau_y$ and $\tau_j$ are scalars and $(\alpha_y,\beta_y,\gamma_y)$ and $(\alpha_j,\beta_j,\gamma_j)$ are scalars, vectors, or matrices of conformable sizes. Following \cite{imbens2025long}, we draw the entries of $\tau_y$, $(\tau_j,\alpha_y,\beta_y,\gamma_y)$, and $(\alpha_j,\beta_j,\gamma_j)$ from the uniform distribution on $[0,1]$ and rescale them so that the $\ell_2$ norms of the vectors $(\tau_j,\alpha_y,\beta_y,\gamma_y)$ and the columns of $(\alpha_j,\beta_j,\gamma_j)$ are equal to $0.5$. The disturbances $\epsilon_j$ are independent mean zero Gaussian variables with variance $0.5$. The error term $\epsilon_y$ has standard deviation $\exp(|\tilde X_1|)$ to introduce heteroskedasticity.

In the experimental sample, treatment is randomly assigned with probability one half. In the observational sample, treatment follows a logistic specification $\mathbb P(D=1 \mid \tilde X,U,G=O)
= \left(1+\exp(\kappa_1^{\top}\tilde X+\kappa_2^{\top}U)\right)^{-1},$
where $(\kappa_1,\kappa_2)$ are generated using the same sampling and rescaling scheme. This induces persistent confounding through the dependence of treatment on both observed covariates and latent factors.

To introduce nonlinear bridge functions, we apply the elementwise transformation
\begin{align} \label{eqn:q}
    \mathrm{sign}(t)\lvert t\rvert^{q}, \qquad q\in\{1,2\},
\end{align}
to each component of $\tilde X,\tilde S_1,\tilde S_2,\tilde S_3$, yielding the observed variables $(X,S_1,S_2,S_3)$. This transformation is invertible and preserves the one to one correspondence between latent and observed variables. When $q=1$, the bridge functions are linear in $(X,S_1,S_2,S_3)$, whereas for $q=2$ they become nonlinear, allowing us to evaluate performance under both linear and nonlinear specifications. We consider $\lambda\in\{0.33, 0.50, 0.67\}$ as the share of experimental data in the sample.

The nuisance functions are estimated using spline bases with four knots and cubic degree. Ridge regularization is applied to the outcome regression to incorporate modern machine learning techniques, and the conditional variances are estimated via KNN.\footnote{Other choices of the tuning parameters deliver similar results.} Table \ref{tab:sim} reports the simulation results based on 500 Monte Carlo replications. As shown in the table, the efficient estimator achieves a reduction in variance across all designs. The dispersion of both estimators declines with the sample size at the root-$n$ rate, consistent with the asymptotic theory.

\begin{table}[htbp!]
\centering
\begin{tabular}{c c c c c c c c c c c c c c}
\toprule
 &  & \multicolumn{4}{c}{1000} & \multicolumn{4}{c}{2000} & \multicolumn{4}{c}{4000} \\
\cmidrule(lr){3-6}
\cmidrule(lr){7-10}
\cmidrule(lr){11-14}
$q$ & $\lambda$ & \multicolumn{2}{c}{SD} & \multicolumn{2}{c}{RMSE} & \multicolumn{2}{c}{SD} & \multicolumn{2}{c}{RMSE} & \multicolumn{2}{c}{SD} & \multicolumn{2}{c}{RMSE} \\
\cmidrule(lr){3-4}
\cmidrule(lr){5-6}
\cmidrule(lr){7-8}
\cmidrule(lr){9-10}
\cmidrule(lr){11-12}
\cmidrule(lr){13-14}
 &  & IKMW & Eff & IKMW & Eff & IKMW & Eff & IKMW & Eff & IKMW & Eff & IKMW & Eff \\
\midrule
\multirow{3}{*}{1} & .33 & .37 & .35 & .39 & .37 & .27 & .26 & .28 & .27 & .19 & .18 & .20 & .20 \\
 & .50 & .47 & .40 & .48 & .41 & .30 & .27 & .31 & .28 & .22 & .20 & .23 & .21 \\
 & .67 & .80 & .46 & .80 & .47 & .52 & .33 & .52 & .34 & .30 & .23 & .30 & .24 \\
\midrule
\multirow{3}{*}{2} & .33 & .37 & .34 & .39 & .37 & .27 & .25 & .29 & .28 & .19 & .18 & .21 & .20 \\
 & .50 & .47 & .38 & .48 & .39 & .32 & .26 & .33 & .29 & .23 & .20 & .25 & .22 \\
 & .67 & .84 & .44 & .84 & .46 & .55 & .31 & .56 & .34 & .33 & .22 & .33 & .24 \\
\bottomrule
\end{tabular}
\caption{Simulation results.}
\label{tab:sim}
\caption*{\footnotesize Simulation results for the estimator in \cite{imbens2025long} (IKMW) and the efficient estimator proposed in Section \ref{sec:long-term}. The number of replications is 500. The parameter $q$ denotes the transformation applied to the variables in (\ref{eqn:q}) to introduce nonlinearity, and $\lambda$ is the share of experimental data in the sample.}
\end{table}

\section{Conclusion} \label{sec:conclude}

This paper develops a unified framework for analyzing local identification in modern causal inference by explicitly linking structural models to their corresponding observable statistical models. We show that the general nonparametric treatment model under unconfoundedness is locally just identified; consequently, all regular, asymptotically linear estimators are first-order equivalent, and the influence functions are all first-order equivalent to the efficient influence function. In contrast, models that are locally overidentified—such as those involving negative controls or long-term causal inference—allow for efficiency improvements and enable nontrivial specification tests.

These results highlight the central role of local overidentification in modern causal analysis. When developing new causal models, it is essential to carry out an explicit model identification analysis. If the model is locally just identified, effort should focus on higher-order properties of regular estimators as they are all first-order equivalent. When feasible, it is preferable to formulate locally overidentified models, which permit the construction of semiparametric efficient estimators, support specification testing with non-trivial powers, and yield richer empirical content.

For future research, the framework could be extended to accommodate dependent data structures and, for example, to investigate event study approaches to causal inference in time series settings \citep{linton2007quantilogram,han2016cross}.

\appendix
\newpage

\numberwithin{equation}{section} 

\section{Technical proofs}

\begin{proof}[Proof of Theorem \ref{thm:ATE-basic}]
    Take any $P \in \mathbf{M}(\mathcal{Y}^\mathcal{T} \times \mathcal{T}\times \mathcal{X})$, we want to show that $P \in \mathbf{P}_{\operatorname{UC}}$. To show this, we construct a structural distribution $P^* \in \mathbf{P}^*_{\operatorname{UC}}$ such that $P$ is induced by $P^*$, that is, $P = P^* \circ s^{-1}$. Consider $P^*$ defined in the following way.
    
    Let $P^*$ and $P$ have the same marginal distribution for $X$. Then we construct the conditional distribution of $(\{Y(t)\}_{t \in \mathcal{T}},T)$ given $X$. We first examine the case of a binary treatment with $\mathcal{T} = \{0,1\}$ and then extend it to general $\mathcal{T}$. We specify the conditional distribution of $(\{Y(t)\}_{t \in \mathcal{T}},T)|X$ as follows:
    \begin{align*}
        & (Y(1),Y(0)) \perp T|X, \text{ and } Y(1) \perp Y(0) \mid X \text{ under } P^*, \\
        & \mathbb{P}_{P^*}(T=t|X) = \mathbb{P}_{P}(T=t|X), \text{ for } t \in \{0,1\}, \\
        & \mathbb{P}_{P^*}(Y(1) \in B|X) = \mathbb{P}_{P}(Y \in B | T=1,X), \text{ for any measurable } B \subset \mathcal{Y}, \\
        & \mathbb{P}_{P^*}(Y(0) \in B|X) = \mathbb{P}_{P}(Y \in B | T=0,X), \text{ for any measurable } B \subset \mathcal{Y},
    \end{align*}
    where we use $\mathbb{P}_{P^*}$ and $\mathbb{P}_{P}$ to signify that the probability is taken under the distribution $P^*$ and $P$, respectively. It is straightforward to verify that the distribution $P^*$ constructed in this way is a probability measure and satisfies the unconfoundedness condition. Hence, $P^* \in \mathbf{P}^*_{\operatorname{UC}}$.
    We can verify that for any $B_1 \subset \mathcal{Y}$ and $B_2 \subset \mathcal{X}$, we have
    \begin{align*}
        & P^* \circ s^{-1}(B_1 \times \{1\} \times B_2) \\
        =& P^* (\{ (y(1),y(0),1) : y(1) \in B_1\} \times  B_2 ) \\
        =& P^*(B_1 \times \mathcal{Y} \times \{1\} \times B_2) \\
        =& \mathbb{P}_{P^*}(Y(1) \in B_1|X \in B_2) \mathbb{P}_{P^*}(Y(0) \in \mathcal{Y}|X \in B_2) \mathbb{P}_{P^*}(T=1|X \in B_2)\mathbb{P}_{P^*}(X \in B_2) \\
        =& \mathbb{P}_{P}(Y \in B_1 | T=1,X\in B_2) \times 1 \times  \mathbb{P}_{P}(T=1|X) \times \mathbb{P}_{P}(X \in B_2) \\
        =& \mathbb{P}_{P}(Y \in B_1 , T=1,X\in B_2) \\
        =& P(B_1 \times \{1\} \times B_2),
    \end{align*}
    where the product decomposition in the third equality is due to the construction that $(Y_1,Y_0) \perp T|X, \text{ and } Y_1 \perp Y_0|X \text{ under } P^*$.
    Similarly, we can verify that $P^* \circ s^{-1}(B_1 \times \{0\}\times B_2) = P(B_1 \times \{0\}\times B_2)$. This shows that $P = P^* \circ s^{-1}$ as desired, which proves that $\mathbf{P}_{\operatorname{UC}} = \mathbf{M}(\mathcal{Y} \times \{0,1\}\times \mathcal{X})$.

    Now we extend to a general index set $\mathcal T\subset\mathbb R$. Assume $\mathcal Y$ is a standard Borel space and $\mathcal T$ is a Borel subset of $\mathbb R$. For each $x\in\mathcal X$ and finite tuple $\mathbf t=(t_1,\dots,t_k)\in\mathcal T^k$, define $Q_x^{\mathbf t}(B_1\times\cdots\times B_k)=\prod_{j=1}^k \mathbb P_{P}(Y\in B_j\mid T=t_j,X=x)$, $B_j\in\mathcal B(\mathcal Y)$. This family is symmetric and consistent under marginalization, so by the Kolmogorov extension theorem there exists a probability measure $Q_x$ on $(\mathcal Y^{\mathcal T},\mathcal B(\mathcal Y)^{\otimes\mathcal T})$ whose coordinates $\{Y(t)\}_{t\in\mathcal T}$ are (conditionally on $X=x$) mutually independent with marginals $Y(t)\mid X=x\sim\mathbb P_{P}(Y\in\cdot\mid T=t,X=x)$. Define $P^*$ by first drawing $X\sim P_X$, then, given $X=x$, drawing $T\sim\mathbb P_{P}(T\in\cdot\mid X=x)$ and an independent process $\{Y(t)\}_{t\in\mathcal T}\sim Q_x$. Under $P^*$, unconfoundedness holds: $\{Y(t)\}_{t\in\mathcal T}\perp T\mid X$. Moreover, for measurable $B\subset\mathcal Y$, $C\subset\mathcal X$, and $t\in\mathcal T$,
\begin{align*}
    P^*\!\circ s^{-1}(B\times\{t\}\times C)&=\int_{C}\mathbb P_{P}(T=t\mid X=x)\,\mathbb P_{P}(Y\in B\mid T=t,X=x)\,P_X(dx)\\
    & =P(B\times\{t\}\times C).
\end{align*}
A standard $\pi$–$\lambda$ argument extends this equality from rectangles to the full product $\sigma$-algebra, hence $P=P^*\!\circ s^{-1}$. Therefore, the argument for $\mathcal T=\{0,1\}$ carries over verbatim to any Borel $\mathcal T\subset\mathbb R$.

\end{proof}

\begin{proof}[Proof of Lemma \ref{lm:neg-contr-obs-equiv}]
Fix any observable law $P$ of $(Y,D,V,Z,X)$ for which there exists a measurable function $h$ satisfying (\ref{eqn:neg-contr-bridge}). We construct, on an extension of the underlying probability space, a latent variable $U$ and potential variables $\{Y(d,z),V(d,z),D(z):d\in\{0,1\},z\in\mathcal Z\}$ such that the identification assumptions are satisfied and the induced observable probability law is $P$.

Extend the space to carry three iid $\mathrm{Unif}(0,1)$ variables $U_D,U_V,U_Y$ independent of $(Y,D,V,Z,X)$ under $P$. Set $U=Z$. This choice makes the completeness condition hold automatically. Define the potential treatment as $D(z) = \mathbf{1}\{\mathbb{P}_P(D=1|Z=z,X) \geq U_D\}$. Then $D=D(Z)$ and, by construction, the conditional law of $D$ given $(Z,X)$ under the constructed model equals its observed conditional law.

Then we use a conditional–quantile construction so that $(Y(d),V)\mid(U,X)$ has exactly the observed joint law of $(Y,V)\mid(D=d,Z=U,X)$, while keeping $V(d,z)=V$. Formally, let $F_{V\mid U,X}(\cdot\mid u,x)$ be the observed conditional cdf of $V$ given $(Z,X)=(u,x)$ and let $V \;=\; F_{V\mid U,X}^{-1}(U_V\mid U,X),$
with $U_V\sim\mathrm{Unif}(0,1)$ independent of $(U,X)$. For each $d\in\{0,1\}$, define the conditional cdf of $Y$ given $(V,U,X,D=d)$ from the observed data by
$
G_d(y\mid v,u,x)\;=\;\mathbb{P}\big(Y\le y\mid V=v,\;D=d,\;Z=u,\;X=x\big).
$
Let $U_Y\sim\mathrm{Unif}(0,1)$ be independent of $(U_V,U,X)$, and set
$
Y(d)\;=\; G_d^{-1}\big(U_Y\mid V,\;U,\;X\big), Y(d,z)=Y(d).
$
Then, for any $u,x$,
$
\mathbb{P}\big(Y(d)\le y,\; V\le v \mid U=u,X=x\big)
=\int_{(-\infty,w]} G_d(y\mid v',u,x)\, dF_{V\mid U,X}(v'\mid u,x)
=F_{Y,V\mid D=d,Z,U,X}(y,v\mid d,u,u,x),
$
so $(Y(d),V)\mid(U,X)$ matches the observed joint distribution of $(Y,V)\mid(D=d,Z=U,X)$, hence reproduces its correlation (and any higher-order) structure. Because $(Y(d),V)$ are measurable functions of $(U,X,U_Y,U_V)$ only, while $(D,Z)$ depend on $(U,X,U_D)$ only, with $(U_Y,U_V,U_D)$ mutually independent and independent of $(U,X)$, we retain $(Y(d),V)\perp(D,Z)\mid(U,X)$. Moreover, $V(d,z)=V$ by construction, and the realized $(Y,D,V,Z,X)$ have exactly the original joint law, so any observable moments—including $\mathbb{E}[Y-h(V,D,X)\mid Z,D,X]=0$—are preserved.
\end{proof}

\begin{proof}[Proof of Theorem \ref{thm:neg-contr-justid}]
    By Lemma \ref{lm:neg-contr-obs-equiv}, we only need to focus on the moment condition (\ref{eqn:neg-contr-bridge}).
    Let $m(Z,D,X,h) = \mathbb{E}_P[Y-h(V,D,X)|Z,D,X]$ for any $h \in L^2(V,D,X)$. The derivative of $m(Z,D,X,\cdot)$ at $h_P$ along any direction $h \in L^2(V,D,X)$ is 
    \begin{align*}
        \nabla m(Z,D,X,h_P)[h] & = \frac{\partial}{\partial \tau} m(Z,D,X,h_P+\tau h) \big|_{\tau=0} \\
        & = -\mathbb{E}_P[\mathbf{1}\{G=O\}h(V,D,X)|Z,D,X] = - Th.
    \end{align*}
    The range space of the linear map $\nabla m(Z,D,X,h_P)$ is given by
    \begin{align*}
        \mathcal{R} & = \left\{ f \in L^2(Z,D,X) : f=-Th, h \in L^2(V,D,X)  \right\} \\
        & = \left\{ f \in L^2(Z,D,X) : f=Th, h \in L^2(V,D,X)  \right\},
    \end{align*}
    which equals the range space of the conditional expectation operator $T$, $\text{Ran}(T)$. By Theorem 4.1 in \cite{chen2018overidentification}, the model specified by the above moment condition is locally just identified if and only if the closure of the range space, $\Bar{\mathcal{R}} = \text{Ran}(T)$, is equal to the full space $L^2(Z,D,X)$. Since the orthogonal complement of $\text{Ran}(T)$ is equal to, $\text{Ker}(T^*)$ \citep[e.g., Theorem 4.12 in][]{rudin1991functional}, the kernel of its adjoint, local just-identification of the model is equivalent to $\text{Ker}(T^*) = \{0\}$. This completes the proof.
\end{proof}

\begin{proof}[Proof of Theorem \ref{thm:neg-contr-speb}]
        Since the first two unconditional moments independently define $\mu_1$ and $\mu_0$, we can, without loss of generality, derive efficiency results for each parameter separately. We begin with $\mu_1$; the derivation for $\mu_0$ is analogous. Afterward, we combine their efficient influence functions to obtain that for ATE. In the case for $\mu_1$, the model has one unconditional moment function $h_1(V,X) - \mu_1$ and two conditional moments in the function $\rho$ with conditioning variables $(Z,X)$. We first need to go through the orthogonalization step in \cite{ai2012semiparametric}. Let $\rho_1 = h_1(V,X) - \mu_1 - \Gamma_1 \rho$. Define $m_1(\mu_1,h) = \mathbb{E}[\rho_1]$ and $m_2(\mu_1,h;Z,X) = \mathbb{E}[\rho_2|Z,X]$. The derivatives of $m_1$ and $m_2$ in $\mu_1$ and $h$ are respectively given by
    \begin{alignat*}{2}
\frac{d m_1}{d\mu_1} &\;=\; -1, &\qquad
\frac{d m_1}{dh}[r] &\;=\; \mathbb{E}\!\bigl[(e_1 - \Gamma_1 L)r\bigr],\\
\frac{d m_2}{d\mu_1} &\;=\; 0, &\qquad
\frac{d m_2}{dh}[r] &\;=\; -L T r.
\end{alignat*}
    where $r$ is any direction in the nonparametric space of $h$. Let $\Sigma_1 = \mathbb{E}[\rho_1^2]$ denote the variance of $\rho_1$. The Fisher information is obtained by minimizing the following objective function over all $r$,
    \begin{align*}
    	& \mathbb{E} \left[ \left(1 + \mathbb{E} \left[ (e_1 - \Gamma_1 L) r \right] \right)^2 \Sigma_1^{-1}  + (LTr)' \Sigma^{-1} (LTr) \right] \\
	= & \big(1 + \mathbb{E} [ \underbrace{T^*(e_1 - \Gamma_1 L)}_{\equiv \psi} r ] \big)^2 \Sigma_1^{-1}  + \langle r, \underbrace{(LT^* \Sigma_2^{-1} TL)}_{\equiv H} r \rangle ,
    \end{align*}
    where the second follows from the derivation that
    \begin{align*}
        (LTr)' \Sigma^{-1} (LTr) = \langle T L r,  \Sigma_2^{-1} TL r \rangle = \langle L r,  T^*\Sigma_2^{-1} TL r \rangle = \langle r,  L T^*\Sigma_2^{-1} TL r \rangle
    \end{align*}
    because $L$ is a function of $(Z,X)$.
    The optimal $r^*$ satisfies the following first-order condition:
    \begin{align*}
    	0 & = \left(1 + \mathbb{E} \left[ \psi r^* \right] \right) \Sigma_1^{-1} \langle r, \psi\rangle + \langle r,  H r^* \rangle,
    \end{align*}
    for any perturbation $r$. This implies that
    \begin{align*}
    	\left(1 + \mathbb{E} \left[ \psi r^* \right] \right) \Sigma_1^{-1} \psi' + H r^* = 0.
    \end{align*}
    Denote $A =1 + \mathbb{E} \left[ \psi r^* \right]$, and let $H^{-1}$ be the generalized inverse of the operator $H$. Then we have 
    \begin{align*}
        r^* = - A \Sigma_1^{-1} H^{-1} \psi',
    \end{align*}
    and hence $A$ satisfies the following equality
    \begin{align*}
         A & = 1 - A \Sigma_1^{-1} \mathbb{E}[\psi H^{-1} \psi'] = 1 - A \Sigma_1^{-1} \lVert H^{-1/2} \psi \rVert^2,
    \end{align*}
    with $\lVert \cdot \rVert$ being the $L^2$-norm in the Hilbert space.
    Therefore, $A$ is solved to be
    \begin{align*}
        A = \frac{\Sigma_1}{\Sigma_1 + \lVert H^{-1/2} \psi' \rVert^2}.
    \end{align*}
    The optimal direction $r^*$ is
    \begin{align*}
        r^* = - \frac{H^{-1} \psi'}{\Sigma_1 + \lVert H^{-1/2} \psi' \rVert^2} ,
    \end{align*}
    The term $\langle r^*, H r^* \rangle$ in the Fisher information can be calculated using the first-order condition:
    \begin{align*}
        \langle r^*, H r^* \rangle = - A \Sigma_1^{-1} \left(A-1 \right).
    \end{align*}
    Hence, the Fisher information is 
    \begin{align*}
        A^2 \Sigma_1^{-1} - A \Sigma_1^{-1} \left(A-1 \right)
        = & A \Sigma_1^{-1} = \frac{1}{\Sigma_1 + \lVert H^{-1/2} \psi' \rVert^2}.
    \end{align*}
    The efficiency bound is the inverse of the Fisher information: $\Sigma_1 + \lVert H^{-1/2} \psi' \rVert^2$.
    
    The efficient score and efficient influence function can also be derived following the proof of Theorem 2.1 in \cite{ai2012semiparametric}. The efficient score is
    \begin{align*}
        & - \left( \frac{dm_1}{d\mu_1} - \frac{dm_1}{dh}[r^*] \right)' \Sigma_1^{-1} \rho_1 - \left( \frac{dm_2}{d\mu_1} - \frac{dm_2}{dh}[r^*] \right)' \Sigma_2^{-1} \rho_2 \\
        = & \frac{\rho_1 - (LTH^{-1}\psi')'\Sigma^{-1}\rho}{\Sigma_1 + \lVert H^{-1/2} \psi' \rVert^2}.
    \end{align*}
    The efficient influence function is equal to the efficient variance bound multiplied by the efficient score:
    \begin{align*}
        \mathbb{EIF}(\mu_1) & = h_1(V,X) - \mu_1 - \Gamma_1 \rho - (LTH^{-1}\psi')'\Sigma^{-1}\rho.
    \end{align*}
    Similarly, the efficient influence function for $\mu_0$ is 
    \begin{align*}
        \mathbb{EIF}(\mu_0) & = h_0(V,X) - \mu_0 - \Gamma_0 \rho - (LTH^{-1}\psi')'\Sigma^{-1}\rho.
    \end{align*}
    This delivers the expression for the efficient influence function presented in the theorem.

    Lastly, we examine the case in which $T$ and $T^*$ are bijective. In this case, $H$ is an invertible operator, and we have $L T H^{-1} T^* = \Sigma L^{-1}$. Therefore, the efficient influence functions reduce to
\begin{align*}
    \mathbb{EIF}(\mu_1) & = \frac{D\bigl(Y-h_1(V,X)\bigr)}{p(Z,X)} + h_1(V,X) - \mu_1, \\
    \mathbb{EIF}(\mu_0) & = \frac{(1-D)\bigl(Y-h_0(V,X)\bigr)}{1-p(Z,X)} + h_0(V,X) - \mu_0.
\end{align*}
Note that the factor $1/\mathbb{P}(D=d\mid Z,X)$ coincides with the treatment confounding bridge defined in Theorem~2.2 of \cite{cui2024semiparametric} under the bijectivity of $T$; see their discussion following Theorem~2.2.

\end{proof}

\begin{proof}[Proof of Lemma \ref{lm:long-term-obs-equiv}]
We construct the distribution of $(\{Y(d), S(d): d = 0,1\}, U)$ separately for the observational and experimental populations. On the observational population ($G = O$), define $U^O = (S_1, S_2)$. For the short-term potential outcomes, define $S^O(d) = S$, since all components of $S$ are observed under both treatment statuses. For the long-term potential outcome, define
\begin{align*}
    Y^O(d) = F^{-1}_{Y | S, D, G}(U_{[0,1]} | S, d, O),
\end{align*}
where $U_{[0,1]}$ is a uniform random variable on $[0,1]$, independent of all other variables. This construction ensures that $Y^O(D) = Y$ holds in the observational sample. The function $F^{-1}_{Y | S, D, G}$ is the conditional quantile function of $Y$ given $(S,D,G)$.
Because $S^O(d)$ is a deterministic function of $U^O$, and the only randomness in $Y^O(d)$ comes from the independent uniform variable $U_{[0,1]}$, it follows that $(Y^O(d), S^O(d)) \perp D \mid U^O$, verifying the first condition. Furthermore, conditional on $U^O$, $S_1(d)$ is deterministic and therefore independent of $(Y^O(d), S_3^O(d))$, which verifies the third condition. For the same reason, the completeness condition (the fifth condition) also holds, since all randomness is separated and $U^O$ is a sufficient statistic.

On the experimental population ($G = E$), define $(Y^E(d), S^E(d), U^E)$ to be an independent copy of $(Y^O(d), S^O(d), U^O)$; that is, the two triplets have the same distribution but are mutually independent. In particular, this implies that $(Y^E(d), S^E(d), U^E) \perp D$ in the experimental sample, which satisfies the second condition.
To complete the construction, define the full-sample potential outcomes by combining the two populations:
\begin{align*}
    Y(d) & = \mathbf{1}\{G = O\} Y^O(d) + \mathbf{1}\{G = E\} Y^E(d),\\
S(d) & = \mathbf{1}\{G = O\} S^O(d) + \mathbf{1}\{G = E\} S^E(d), \\
U & = \mathbf{1}\{G = O\} U^O + \mathbf{1}\{G = E\} U^E.
\end{align*}
This guarantees that the conditional distribution of $(S(d), U) \mid G$ is invariant to $G$, verifying the fourth condition.

The first five conditions are thus satisfied. Finally, the sixth condition follows from Lemma 1 in \citet{imbens2025long}, which establishes that any function $h$ satisfying the observable bridge condition \eqref{eqn:outcome-bridge-obs} also satisfies the unobservable bridge condition \eqref{eqn:outcome-bridge-unobs}. All six conditions are therefore verified, and since the observable distribution was arbitrary subject only to (\ref{eqn:outcome-bridge-obs}), that moment condition is indeed the model’s sole refutable implication.
\end{proof}

\begin{proof}[Proof of Theorem \ref{thm:long-term-justid}]
    By Lemma \ref{lm:long-term-obs-equiv}, we only need to focus on the moment condition (\ref{eqn:outcome-bridge-obs}), which, under the condition that $\mathbb{P}(G=O|S_2,S_1,D)>0$, is equivalent to the following moment condition: 
    \begin{align*}
        \mathbb{E}[\mathbf{1}\{G=O\}(Y-h(S_3,S_2,D))|S_2,S_1,D]=0.
    \end{align*}
    Let $m(S_2,S_1,D,h) = \mathbb{E}_P[\mathbf{1}\{G=O\}(Y-h(S_3,S_2,D))|S_2,S_1,D]$ for any $h \in L^2(S_3,S_2,D)$. The derivative of $m(S_2,S_1,D,\cdot)$ at $h_P$ along any direction $h \in L^2(S_3,S_2,D)$ is 
    \begin{align*}
        \nabla m(S_2,S_1,D,h_P)[h] & = \frac{\partial}{\partial \tau} m(S_2,S_1,D,h_P+\tau h) \big|_{\tau=0} \\
        & = -\mathbb{E}_P[\mathbf{1}\{G=O\}h(S_3,S_2,D)|S_2,S_1,D] = - Kh.
    \end{align*}
    The range space of the linear map $\nabla m(S_2,S_1,D,h_P)$ is given by
    \begin{align*}
        \mathcal{R} & = \left\{ f \in L^2(S_2,S_1,D) : f=-Kh, h \in L^2(S_3,S_2,D)  \right\} \\
        & = \left\{ f \in L^2(S_2,S_1,D) : f=Kh, h \in L^2(S_3,S_2,D)  \right\},
    \end{align*}
    which equals the range space of the conditional expectation operator $K$, $\text{Ran}(K)$. The remainder of the proof parallels that of Theorem \ref{thm:neg-contr-justid} and relies on Theorem 4.1 of \cite{chen2018overidentification}.

\end{proof}

\begin{proof}[Proof of Theorem \ref{thm:long-term-speb}]
    Since the first two unconditional moments independently define $\mu_1$ and $\mu_0$, we can, without loss of generality, derive efficiency results for each parameter separately. We begin with $\mu_1$; the derivation for $\mu_0$ is analogous. Afterward, we combine their efficient influence functions to obtain that for ALTTE. In the case for $\mu_1$, the model has one unconditional moment function $\rho_1$ and one conditional moment function $\rho_2$:
    \begin{align*}
    	\rho_1(\theta,h;Y,S,D,G) & = \mathbf{1}\{G=E\} D\left( h(S_3,S_2,D) - \mu_1 \right), \\
	\rho_2(\theta,h;Y,S,D,G) & = \mathbf{1}\{G=O\}(Y - h(S_3,S_2,D)),
    \end{align*}
    where the conditioning variables for the conditional moment are $S_2,S_1,$ and $D$. Notice that $\rho_1$ and $\rho_2$ are orthogonal because $ \mathbf{1}\{G=E\}  \mathbf{1}\{G=O\}=0$. So we do not need to go through the orthogonalization step in \cite{ai2012semiparametric}. Define $m_1(\mu_1,h) = \mathbb{E}[\rho_1(\theta,h;Y,S,D,G)]$ and $m_2(\mu_1,h;S_2,S_1,D) = \mathbb{E}[\rho_2(\mu_1,h;Y,S,D,G)|S_2,S_1,D]$. The derivatives of $m_1$ and $m_2$ in $\mu_1$ and $h$ are respectively given by
    \begin{alignat*}{2}
\frac{d m_1}{d\mu_1} &\;=\; -\mathbb{P}(G=E,D=1), &\qquad
\frac{d m_1}{dh}[r] &\;=\; \mathbb{E}\!\bigl[\mathbf{1}\{G=E\}\, D\, r(S_3,S_2,D)\bigr],\\
\frac{d m_2}{d\mu_1} &\;=\; 0, &\qquad
\frac{d m_2}{dh}[r] &\;=\; -K r.
\end{alignat*}
    where $r$ is any direction in the nonparametric space of $h$. The Fisher information is obtained by minimizing the following objective function over all $r$,
    \begin{align*}
    	& \mathbb{E} \left[ \left(\mathbb{P}(G=E,D=1) + \mathbb{E} \left[ \mathbf{1}\{G=E\} D r(S_3,S_2,D) \right] \right)^2 \Sigma_1^{-1}  + (Kr)^2 \Sigma_2^{-1}(S_2,S_1,D) \right] \\
	= & \left(\mathbb{P}(G=E,D=1) + \mathbb{E} \left[ \tilde{D} r(S_3,S_2,D) \right] \right)^2 \Sigma_1^{-1}  + \langle r, (K^* \Sigma_2^{-1}(S_2,S_1,D) K) r \rangle ,
    \end{align*}
    where $\langle \cdot, \cdot \rangle$ denotes the Hilbert space inner product, $\Sigma_1 = \mathbb{E}[\rho_1^2]$, $\Sigma_2 = \mathbb{E}[\rho_2^2|S_2,S_1,D]$, and $\tilde{D} = \pi(S_3,S_2,D) D$. 
    The optimal $r^*$ satisfies the following first-order condition:
    \begin{align*}
    	0 & = \left(\mathbb{P}(G=E,D=1) + \mathbb{E} \left[ \tilde{D} r^*(S_3,S_2,D) \right] \right) \Sigma_1^{-1} \langle \tilde{D}, r(S_3,S_2,D) \rangle \\
        & + \langle (K^* \Sigma_2^{-1}(S_2,S_1,D) K) r^*, r \rangle,
    \end{align*}
    for any perturbation $r$. This implies that
    \begin{align*}
    	\left(\mathbb{P}(G=E,D=1) + \mathbb{E} \left[ \tilde{D} r^*(S_3,S_2,D) \right] \right) \Sigma_1^{-1} \tilde{D} + (K^* \Sigma_2^{-1}(S_2,S_1,D) K) r^* = 0.
    \end{align*}
    Denote $A = \mathbb{P}(G=E,D=1) + \mathbb{E} \left[ \tilde{D} r^*(S_3,S_2,D) \right]$, and let $(K^* \Sigma_2^{-1}(S_2,S_1,D) K)^{-1}$ be the generalized inverse of the operator $K^* \Sigma_2^{-1}(S_2,S_1,D) K$. Then we have 
    \begin{align*}
        r^* = - A \Sigma_1^{-1} (K^* \Sigma_2^{-1}(S_2,S_1,D) K)^{-1} \tilde{D},
    \end{align*}
    and hence $A$ satisfies the following equality
    \begin{align*}
         A & = \mathbb{P}(G=E,D=1) - A \Sigma_1^{-1} \mathbb{E}[(K^* \Sigma_2^{-1}(S_2,S_1,D) K)^{-1} \tilde{D}^2] \\
        & = \mathbb{P}(G=E,D=1) - A \Sigma_1^{-1} \lVert (K^* \Sigma_2^{-1}(S_2,S_1,D) K)^{-1/2} \tilde{D} \rVert^2.
    \end{align*}
    Therefore, $A$ is solved to be
    \begin{align*}
        A = \frac{\mathbb{P}(G=E,D=1) \Sigma_1}{(\Sigma_1 + \lVert (K^* \Sigma_2^{-1}(S_2,S_1,D) K)^{-1/2} \tilde{D} \rVert^2)}.
    \end{align*}
    The optimal direction $r^*$ is
    \begin{align*}
        r^* = - \frac{\mathbb{P}(G=E,D=1)(K^* \Sigma_2^{-1}(S_2,S_1,D) K)^{-1} \tilde{D}}{(\Sigma_1 + \lVert (K^* \Sigma_2^{-1}(S_2,S_1,D) K)^{-1/2} \tilde{D} \rVert^2)} ,
    \end{align*}
    The term $\langle r^*, (K^* \Sigma_2^{-1}(S_2,S_1,D) K) r^* \rangle$ in the Fisher information can be calculated using the first-order condition:
    \begin{align*}
        \langle r^*, (K^* \Sigma_2^{-1}(S_2,S_1,D) K) r^* \rangle = - A \Sigma_1^{-1} \left(A-\mathbb{P}(G=E,D=1) \right).
    \end{align*}
    Hence, the Fisher information is 
    \begin{align*}
        A^2 \Sigma_1^{-1} - A \Sigma_1^{-1} \left(A-\mathbb{P}(G=E,D=1) \right)
        = & A \Sigma_1^{-1}/\mathbb{P}(G=E,D=1) \\
        = & \frac{\mathbb{P}(G=E,D=1)^2}{(\Sigma_1 + \lVert (K^* \Sigma_2^{-1}(S_2,S_1,D) K)^{-1/2} \tilde{D} \rVert^2)}.
    \end{align*}
    The efficiency bound is the inverse of the Fisher information:
    \begin{align*}
        \frac{(\Sigma_1 + \lVert (K^* \Sigma_2^{-1}(S_2,S_1,D) K)^{-1/2} \pi(S_3,S_2,D) D \rVert^2)}{\mathbb{P}(G=E,D=1)^2}.
    \end{align*}
    The efficient score and efficient influence function can also be derived following the proof of Theorem 2.1 in \cite{ai2012semiparametric}. The efficient score is
    \begin{align*}
        & - \left( \frac{dm_1}{d\mu_1} - \frac{dm_1}{dh}[r^*] \right) \Sigma_1^{-1} \rho_1 - \left( \frac{dm_2}{d\mu_1} - \frac{dm_2}{dh}[r^*] \right) \Sigma_2^{-1} \rho_2 \\
        = & \frac{\mathbb{P}(G=E,D=1)\mathbf{1}\{G=E\} D\left( h(S_3,S_2,D) - \mu_1 \right)}{(\Sigma_1 + \lVert (K^* \Sigma_2^{-1}(S_2,S_1,D) K)^{-1/2} \tilde{D} \rVert^2)} \\
        + & \frac{ \mathbb{P}(G=E,D=1)\mathbf{1}\{G=O\}[K(K^* \Sigma_2^{-1}(S_2,S_1,D) K)^{-1} \tilde{D}] \Sigma_2^{-1}(Y - h(S_3,S_2,D))}{(\Sigma_1 + \lVert (K^* \Sigma_2^{-1}(S_2,S_1,D) K)^{-1/2} \tilde{D} \rVert^2)}.
    \end{align*}
    The efficient influence function is equal to the efficient variance bound multiplied by the efficient score:
    \begin{align*}
        \mathbb{EIF}(\mu_1) & = \frac{\mathbf{1}\{G=E\} D\left( h(S_3,S_2,D) - \mu_1 \right)}{\mathbb{P}(G=E,D=1)} \\
        & + \frac{ \mathbf{1}\{G=O\}[K(K^* \Sigma_2^{-1}(S_2,S_1,D) K)^{-1} \pi(S_3,S_2,D)] D \Sigma_2^{-1}(Y - h(S_3,S_2,D))}{\mathbb{P}(G=E,D=1)}.
    \end{align*}
    Here we can take the variable $D$ outside of the operator $K$ because $D$ is measurable with respect to the sigma-algebra generated by $(S_2,S_1,D)$.
    Similarly, we can obtain the efficient influence function for $\mu_0$ as
    \begin{align*}
        \mathbb{EIF}(\mu_0) & = \frac{\mathbf{1}\{G=E\} (1-D)\left( h(S_3,S_2,D) - \mu_1 \right)}{\mathbb{P}(G=E,D=0)} \\
        & + \frac{ \mathbf{1}\{G=O\}[K(K^* \Sigma_2^{-1}(S_2,S_1,D) K)^{-1} \pi(S_3,S_2,D)] (1-D) \Sigma_2^{-1}(Y - h(S_3,S_2,D))}{\mathbb{P}(G=E,D=0)}.
    \end{align*}

    Next, we derive the formula for the efficient variance bound of $\mu_1 - \mu_0$ by computing the second moment of the efficient influence function. Note that the efficient influence function can be decomposed into two orthogonal parts with factors $\mathbf{1}\{G=E\}$ and $\mathbf{1}\{G=O\}$. Below, we repeatedly use the fact that $D f(D) = D f(1)$ and $(1-D)f(D) = (1-D)f(0)$ for any function $f$ of $D$. For the experimental part with $G=E$, note that $D\mu_1 = D\mu(D)$ and $(1-D)\mu_0 = (1-D)\mu(D)$. Therefore, we have
    \begin{align*}
        & \frac{\mathbf{1}\{G=E\} D\left( h(S_3,S_2,D) - \mu_1 \right)}{\mathbb{P}(G=E,D=1)} -  \frac{\mathbf{1}\{G=E\} (1-D)\left( h(S_3,S_2,D) - \mu_1 \right)}{\mathbb{P}(G=E,D=0)} \\
        = & \frac{\mathbf{1}\{G=E\}}{\mathbb{P}(G=E)} \left( \frac{D}{p_E} - \frac{1-D}{1-p_E}\right)\left( h(S_3,S_2,D) - \mu(D) \right) \\
        = & \frac{\mathbf{1}\{G=E\}}{\mathbb{P}(G=E)} \frac{D-p_E}{p_E(1-p_E)}\left( h(S_3,S_2,D) - \mu(D) \right).
    \end{align*}
    The second moment of the above expression gives the first term in the efficiency bound. For the observational part with $G=O$, we have
    \begin{align*}
        \frac{\mathbf{1}\{G=O\}}{\mathbb{P}(G=E,D=1)}D & = \frac{\mathbf{1}\{G=O\}}{\mathbb{P}(G=O,D=1)} \frac{\mathbb{P}(G=O,D=1)}{\mathbb{P}(G=E,D=1)} D \\
        & = \frac{\mathbf{1}\{G=O\}}{\mathbb{P}(G=O, D=1)} \frac{\mathbb{P}(G=O | D=1)}{\mathbb{P}(G=E | D=1)} D \\
        & = \frac{\mathbf{1}\{G=O\}}{\mathbb{P}(G=O, D=1)} \frac{\mathbb{P}(G=O | D)}{\mathbb{P}(G=E | D)} D \\
        & = \frac{\mathbf{1}\{G=O\}}{\mathbb{P}(G=O)} \frac{\mathbb{P}(G=O | D)}{\mathbb{P}(G=E | D)} \frac{D}{p_O},
    \end{align*}
    where in the second equality we divide the numerator and denominator by $\mathbb{P}(D=1)$. Similarly, we have
    \begin{align*}
         \frac{\mathbf{1}\{G=O\}}{\mathbb{P}(G=E,D=0)}(1-D) & = \frac{\mathbf{1}\{G=O\}}{\mathbb{P}(G=O)} \frac{\mathbb{P}(G=O | D)}{\mathbb{P}(G=E | D)} \frac{1-D}{1-p_O}.
    \end{align*}
    The entire observational part of the efficient influence function becomes
    \begin{align*}
        \frac{\mathbf{1}\{G=O\}}{\mathbb{P}(G=O)} \bar{D} KM^{-1} \pi(S_3,S_2,D) \Sigma_2^{-1}(Y - h(S_3,S_2,D)),
    \end{align*}
    where $\bar{D} =  \frac{D-p_O}{p_O(1-p_O)} \frac{\mathbb{P}(G=O | D)}{\mathbb{P}(G=E | D)}$.
    The second moment of the above term (omitting the $\mathbb{P}(G=O)^2$ term in the denominator) is
    \begin{align*}
        & \mathbb{E} \left[ \mathbf{1}\{G=O\} \bar{D}^2 [KM^{-1} \pi(S_3,S_2,D)]^2 \Sigma_2^{-2}(Y - h(S_3,S_2,D))^2 \right] \\
        = & \mathbb{E} \left[ \bar{D}^2 [KM^{-1} \pi(S_3,S_2,D)]^2 \Sigma_2^{-2} \mathbb{E}[\mathbf{1}\{G=O\} (Y - h(S_3,S_2,D))^2| S_2,S_1,D] \right] \\
        = & \mathbb{E} \left[ [KM^{-1} \pi(S_3,S_2,D)\bar{D}]^2 \Sigma_2^{-1} \right] \\
        = & \langle KM^{-1} \pi(S_3,S_2,D)\bar{D}, \Sigma_2^{-1}KM^{-1} \pi(S_3,S_2,D)\bar{D} \rangle \\
        = & \langle M^{-1} \pi(S_3,S_2,D)\bar{D}, K^*\Sigma_2^{-1}KM^{-1} \pi(S_3,S_2,D)\bar{D} \rangle \\
        = & \left\lVert M^{-1/2} \bar{D} \pi(S_3,S_2,D) \right\lVert^2,
    \end{align*}
    where the second line follows from that $\bar{D}$ and $[KM^{-1} \pi(S_3,S_2,D)]$ are measurable with respect to $(S_2,S_1,D)$, the third line follows from the definition of $\Sigma_2$, and the fifth line follows from the definition of $K^*$. This proves the formula for the efficiency bound.
    
    Lastly, we verify that the bound reduces to the one obtained by \citet{imbens2025long} in the case where $K$ is bijective. The efficiency bound given in Theorem 7 of \citet{imbens2025long} is
    \begin{align*}
\sigma^{2}
&=\frac{1+\lambda}{\lambda}\,
  \mathbb{E}\Biggl[
    \Bigl(
      \frac{D-\mathbb{P}(D=1 | G=E)}
           {\mathbb{P}(D=1 | G=E)}
      \bigl(h(S_{3},S_{2},D)- \mu(D)\bigr)
    \Bigr)^{2}
    \,\Bigm|\,G=E
  \Biggr] \\
&+(1+\lambda)\,
  \mathbb{E}\Biggl[
    \Bigl(
      \frac{D-\mathbb{P}(D=1 | G=O)}
           {\mathbb{P}(D=1 | G=O)}
      \,q(S_{2},S_{1},D)\,
      \bigl(Y-h(S_{3},S_{2},D)\bigr)
    \Bigr)^{2}
    \,\Bigm|\,G=O
  \Biggr],
\end{align*}
where $\lambda = \mathbb{P}(G=E)/\mathbb{P}(G=O)$, and $q$ is a function satisfying the following equation:
\begin{align*}
    \mathbb{E}\left[ \mathbf{1}\{G=O\} \left(\frac{\mathbb{P}(G=E|D)}{\mathbb{P}(G=O|D)} q(S_2,S_1,D) + 1 \right) \Big| S_2,S_1,D \right] = 1.
\end{align*}
Solving the above equation, we obtain that
\begin{align*}
    q(S_2,S_1,D) = \frac{\mathbb{P}(G=O|D)}{\mathbb{P}(G=E|D)} \frac{\mathbb{P}(G=E | S_2, S_1, D)}{\mathbb{P}(G=O | S_2, S_1, D)}.
\end{align*}
    The first term in $\sigma^2$ already matches the first term of our efficiency bound, and we only need to focus on the second term derived from the observational part of the data.
    Notice that the operator $K(\cdot) = \mathbb{P}(G=O|S_2,S_1,D) \mathbb{E}[\cdot | S_2,S_1,D]$ and its adjoint $K^*(\cdot) = \mathbb{P}(G=O|S_3,S_2,D) \mathbb{E}[\cdot | S_3,S_2,D]$. When the operator $K$ is bijective, both $\mathbb{E}[\cdot | S_2,S_1,D]$ and $\mathbb{E}[\cdot | S_3,S_2,D]$ equal the identity operator.\footnote{This is because, for any bijective idempotent operator $\tilde{K}$, there exists a well-defined inverse $\tilde{K}^{-1}$. We have $\tilde{K}^2 = \tilde{K}$. Apply $\tilde{K}^{-1}$ to both sides of the equation, we obtain that $\tilde{K}$ is the identity operator.} Additionally, there exists an invertible measurable mapping between $(S_3,S_2,D)$ and $(S_2,S_1,D)$. This implies that $\pi(S_3,S_2,D) = \mathbb{P}(G=E|S_2,S_1,D)$. The operator $M$ simplifies to
    \begin{align*}
        M = K^* \Sigma_2^{-1} K = \mathbb{P}(G=O|S_2,S_1,D)^2 \Sigma_2^{-1} I,
    \end{align*}
    with $I$ being the identity operator.
    Therefore, the second term in our efficiency bound simplifies to
    \begin{align*}
       \frac{1}{\mathbb{P}(G=O)^2} \left\lVert \Sigma_2^{1/2}  \bar{D} \frac{\mathbb{P}(G=E|S_2,S_1,D)}{\mathbb{P}(G=O|S_2,S_1,D)} \right\lVert^2,
    \end{align*}
    which is equal to the second term of $\sigma^2$.

\end{proof}

\bibliographystyle{chicago}
\bibliography{localid.bib}

\end{document}